\documentclass[journal,twoside,onecolumn,web]{ieeecolor}
\usepackage{generic}
\usepackage{cite}
\usepackage{amsmath,amssymb,amsfonts}
\usepackage{algorithmic}
\usepackage{graphicx}
\usepackage{algorithm,algorithmic}
\usepackage{textcomp}
\usepackage[utf8]{inputenc} 
\usepackage[english]{babel} 
\graphicspath{{images/}} 
\usepackage{mathrsfs}
\usepackage{Macros}
\usepackage[usenames, dvipsnames]{xcolor}

\newtheorem{lemma}{Lemma}

\newtheorem{theorem}[lemma]{Theorem}

\newtheorem{corollary}[lemma]{Corollary}

\newtheorem{remark}[lemma]{Remark}
\newtheorem{definition}[lemma]{Definition}

\def\proofof[#1]{\noindent\hspace{2em}{\itshape #1: }}

\usepackage{tikz}
\usepackage{pgfplots}
\definecolor{customblue}{RGB}{6,61,121}
\pgfplotsset{compat=1.18}
\pgfplotsset{
standard/.style={
			set layers=tick labels on top
		},
layers/tick labels on top/.define layer set=
		{axis background,axis grid,axis ticks,axis lines,main,%
			axis tick labels,
			axis descriptions,axis foreground}
		{/pgfplots/layers/standard}
  }
\usetikzlibrary{cd}
\usetikzlibrary{shapes,fit} 

\usetikzlibrary{decorations.markings}
\tikzset{degil/.style={
		decoration={markings,
			mark= at position 0.5 with {
				\node[transform shape] (tempnode) {$\backslash$};
			}
		},
		postaction={decorate}
	}
}
\allowdisplaybreaks[4]  
\parindent0pt           
\usepackage{ifdraft}
\ifdraft{
    \usepackage{todonotes}      

	\usepackage[normalem]{ulem}
    \usepackage[displaymath, mathlines,switch]{lineno}         
    
}{}

\usepackage[hyphens]{url}					
\usepackage[hidelinks]{hyperref}			
\hypersetup{final=true}



\usepackage{soul} 

\begin{document}
\title{Lyapunov Characterization for ISS of Impulsive Switched Systems}
\author{Saeed Ahmed, Patrick Bachmann, and Stephan Trenn
\thanks{Submitted on May 01, 2024. }
\thanks{Saeed Ahmed is with the Jan C. Willems Center for Systems
and Control and the Engineering, and Technology Institute Groningen (ENTEG), Faculty of Science and Engineering, University of Groningen, 9747 AG Groningen, The Netherlands (e-mail: s.ahmed@rug.nl).  }
\thanks{Patrick Bachmann is with the Institute of Mathematics, University of W{\"u}rzburg, Germany (e-mail: patrick.bachmann@uni-wuerzburg.de).}
\thanks{Stephen Trenn is with the Jan C. Willems Center for Systems
and Control and the  Bernoulli Institute for Mathematics, Computer Science, and Artificial Intelligence, Faculty of Science and Engineering, University of Groningen, The Netherlands (e-mail: s.trenn@rug.nl).}}

\maketitle

\begin{abstract}
In this study, we investigate the ISS of impulsive switched systems that have modes with both stable and unstable flows.  We assume that the switching signal satisfies mode-dependent average dwell and leave time conditions. To establish ISS conditions, we propose two types of time-varying ISS-Lyapunov functions: one that is non-decreasing and another one that is decreasing. Our research proves that the existence of either of these ISS-Lyapunov functions is a necessary and sufficient condition for ISS. We also present a technique for constructing a decreasing ISS-Lyapunov function from a non-decreasing one, which is useful for its own sake. Our findings also have added value to previous research that only studied sufficient conditions for ISS, as our results apply to a broader class of systems. This is because we impose less restrictive dwell and leave time constraints on the switching signal and our ISS-Lyapunov functions are time-varying with general nonlinear conditions imposed on them.  Moreover, we provide a method to guarantee the ISS of a particular class of impulsive switched systems when the switching signal is unknown.
\end{abstract}

\begin{IEEEkeywords}
 ISS, switched systems, impulsive Systems, Lyapunov methods, dwell time, leave time. 
\end{IEEEkeywords}

\section{Introduction}
Impulsive and switched systems are two important classes of hybrid dynamical systems. Impulsive systems consist of a continuous behavior referred to as a \emph{flow} and abrupt state changes referred to as \emph{jumps}\cite{Simeonov1986}. Switched systems, on the other hand, consist of a family of flows and a switching signal that determines which flow is active at any given time~\cite{liberzon2003switching}. Systems involving impulsive and switching dynamics are ubiquitous in robotics, aircraft, the automotive industry, and network control~\cite{li2005switched}.

When dealing with dynamical systems, it is vital to consider their sensitivity to external inputs or perturbations. Input-to-state stability (ISS) is a useful concept that ensures tolerance to these inputs and helps analyze the system's behavior. The ISS concept can also be applied to analyze the stability and synthesize controllers for dynamical systems with disturbance inputs and complex structures; see, e.g., \cite{jiang1994small}, \cite{heemels2007input}, and \cite{heemels2008input}. However, analyzing the ISS of systems with impulsive and switching dynamics is a challenging problem due to the hybrid nature of these systems. This paper aims to focus on this problem and explore potential solutions.

The stability of switched systems can be categorized based on arbitrary and constrained switching. For a system with arbitrary switching, it is necessary to require that all of its flows are stable. However, even if all of its flows are stable, it is not true in general that the overall switched system is stable. Motivated by this, \cite{dayawansa1999converse} and \cite{lin2009stability} provided several sufficient (and necessary) conditions for the stability of switched systems with arbitrary switching. To ensure the stability of switched systems, whose stability cannot be guaranteed under arbitrary switching, a constraint is imposed on the number of switches via a suitable bound referred to as a \emph{dwell time} constraint. An interesting example of this is in switched systems with stable and unstable flows, whose stability can be guaranteed by quantifying stable and unstable flows in terms of dwell time conditions (\text{cf.} \cite{Ahmed2018, Zhai2001}). A similar dwell time approach is used to guarantee the stability of impulsive systems, quantifying stable flow and unstable jumps, or unstable flow and stable jumps (\text{cf.} \cite{Hespanha2008, Dashkovskiy2013}).

In a similar spirit, we aim at deriving ISS conditions for impulsive switched systems having modes\footnote{It is worth noting that in this paper, the term \emph{mode} has a different definition compared to most switched systems' literature. In switched systems, modes refer to the flows. Here, we expand the definition of the mode to include the succeeding jump as well. Alternatively, one could also consider a flow and its preceding jump as a mode, which is standard in, for example, switched DAEs \cite{Tren12}. Our results can be easily modified to cover this scenario as well.} with stable and unstable flows. Our approach is based on distinguishing between modes with stable and unstable flows and quantifying them to ensure the ISS of the overall impulsive switched system. To accomplish this, we require the switching signal to satisfy two less restrictive constraints: mode-dependent average dwell time (MDADT) and mode-dependent average leave time\footnote{The mode-dependent average leave time is also referred to as mode-dependent reverse-average dwell time in the literature.} (MDALT). MDADT indicates that less frequent jumps can stabilize modes with stable flows, while MDALT suggests that frequent jumps can stabilize modes with unstable flows. 

To establish the ISS conditions, we introduce non-decreasing time-varying ISS-Lyapunov functions and decreasing time-varying ISS-Lyapunov functions, which are a subclass of the non-decreasing ones. Then, we show that the existence of the non-decreasing ones gives a sufficient condition for ISS and the decreasing ones provide a necessary condition. Therefore, it implies that the existence of both non-decreasing and decreasing ISS Lyapunov functions is equivalent to ISS. 

To the best of the authors' knowledge,  \emph{a converse ISS-Lyapunov theorem} has not been provided before for impulsive switched systems. Moreover, we provide a technique for \emph{constructing a decreasing ISS-Lyapunov function from a non-decreasing one}, which is important and useful in its own right. This construction is motivated by the facts that (i) the relation between decreasing and non-decreasing ISS-Lyapunov functions is important because every decreasing Lyapunov function is also non-decreasing but the converse does not hold true in general, 
and (ii) the level sets of a decreasing ISS-Lyapunov function directly indicate the reachable sets. Our results are summarized in the Fig.~\ref{fig:ISSLFEquiv}.

\begin{figure}[htbp]
    \vspace{.3cm}
    \centering
    \begin{tikzpicture}
        \node (ISS) at (0,0) {ISS};
        \node (nondecrLF) at (-2.8,-2) [text width=3.3cm,align=center]{$\exists$ a non-decreasing ISS-Lyapunov function};
        \node (decrLF) at (2.8,-2) [text width=3cm]{$\exists$ a decreasing ISS-Lyapunov function};
        
        \path
        (ISS) edge[thick,double,double equal sign distance,-{Implies[]}] node[above right=.1cm]{\footnotesize Thm. \ref{thm:converseLyapunovTheorem}}(decrLF)
        ([shift={(0cm,-0.2cm)}]decrLF.west) edge[thick,double,double equal sign distance,-{Implies[]}]
        node[below=.1cm]{\footnotesize Def. \ref{def:decrLF}} ([shift={(0cm,-0.2cm)}]nondecrLF.east)
        ([shift={(0cm,0.2cm)}]decrLF.west) edge[thick,double,double equal sign distance,{Implies[]}-]
        node[above=.1cm]{\footnotesize Thm. \ref{thm:ConstructionOfLF}} ([shift={(0cm,0.2cm)}]nondecrLF.east)
        (nondecrLF) edge[thick,double,double equal sign distance,-{Implies[]}] node[above left=.1cm]{\footnotesize Thm. \ref{thm:ISS}} (ISS);
    \end{tikzpicture}
    \caption{Summary of our results}
    \label{fig:ISSLFEquiv}
\end{figure}
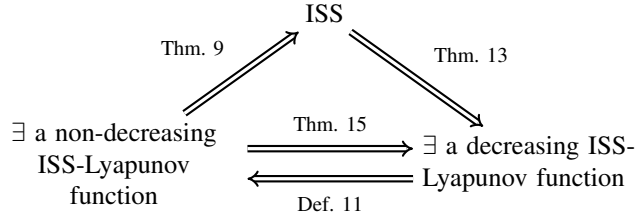

The results available in the literature for ISS of impulsive switched systems 
\cite{liu2012class, li2018input, zhu2020input, mancilla2020uniform}
 only provide a \emph{sufficient} condition of ISS. We even have added value to these results, which are summarized below (cf.\ our recent conference paper \cite{Saeed2024atime-varying}):
\begin{itemize}
\item Our results apply to systems in which some modes have stable flows, while others may have unstable flows. However, the results of \cite{liu2012class} and \cite{mancilla2020uniform} are limited to systems in which all modes have either stable or unstable flows.
    \item Our approach utilizes time-varying ISS-Lyapunov functions and thus can provide ISS conclusions for impulsive switched systems with simultaneous instability of flow and jumps, while results of \cite{li2018input} and \cite{zhu2020input} cannot be used to conclude the ISS of this class of systems.
    \item Our dwell and leave time constraints are mode-dependent and we do not require them to be the same as in \cite{li2018input}. This allows us to consider a broader class of systems as compared to \cite{li2018input}.
    \item We provide an approach to achieve robustness of our ISS results with respect to unknown switching signals for a class of impulsive switched systems with time-independent flow and jump maps.
    \item Our approach restricted to linear systems is constructive in the sense that a set of LMIs can be defined, whose feasibility guarantees ISS for a class of unknown switchings.
    \item We complement the approach in \cite{Yin2023} by providing ISS and including impulsive dynamics and infinitely many modes.
\end{itemize}

This work is an extension of our conference paper \cite{Saeed2024atime-varying}. It provides several enhancements, including a detailed proof of Theorem~\ref{thm:ISS}, which was only sketched in the conference version. Additionally, it allows an infinite number of modes, whereas the conference version was restricted to a finite number of modes. It also introduces a necessary and sufficient condition for ISS through the introduction of a decreasing ISS-Lyapunov function, whereas only a sufficient condition of ISS was provided in the conference version via a non-decreasing ISS-Lyapunov function. Moreover, it presents a method to construct a decreasing ISS-Lyapunov function from a non-decreasing one. Finally, it provides a method to guarantee ISS of a class of impulsive switched systems with time-independent jump and flow maps when the switching signal is unknown. We additionally apply the aforementioned result to linear systems to obtain a set of LMIs guaranteeing ISS.

The rest of the paper is organized as follows. In Section~\ref{sec:prelim}, we will introduce notation, system description, necessary definitions, and problem formulation. In Section~\ref{sec:ISS_result}, we will provide a sufficient condition for ISS of the considered class of systems via a non-decreasing ISS-Lyapunov function. In Section~\ref{sec:nece_and_suff}, we will provide a necessary and sufficient condition of ISS via a decreasing ISS-Lyapunov function. In Section~\ref{sec:construction}, we will suggest a method for constructing a decreasing ISS-Lyapunov function from the non-decreasing ISS-Lyapunov function proposed in Section~\ref{sec:ISS_result}. In Section~\ref{sec:robustness}, we will provide a method to guarantee ISS for impulsive switched systems with time-independent flow and jump maps when the switching signal is unknown. We will then apply this robustness result to find sufficient conditions of ISS for linear systems in terms of LMIs in the special case when the ISS-Lyapunov functions are quadratic and time-independent. Finally, we will conclude the paper by summarizing our work in Section~\ref{sec:conclusion}.

\section{Preliminaries and Problem Formulation}
\label{sec:prelim}
We denote the set of positive integers by $\N$, the set of nonnegative integers by $\N_0$, the set of real numbers by $\R$,  the set of nonnegative real numbers by $\R^+_0$, the space of continuous functions from normed spaces $X$ to $Y$ by $\Cont(X,Y)$, the space of locally bounded piecewise continuous functions from $X$ to $Y$ by $\mathcal{PC}\!\para{X, Y}$, the identity function $\id: X \to X$, $x \mapsto x$, and  the ball of radius $r > 0$ around $0$ in $X$ by $\ball_X(r)$. Let $X \subseteq \R^n$ and $U \subseteq \R^m$ represent the state space and input space, respectively. Let $U_c$ be the space of bounded functions from $[0, \infty)$ to $U$ with norm $\norm{u}_\infty := \sup_{t \in [0, \infty)}\!\braces{\norm{u(t)}_U}$. For a function $f\colon X \to Y$, we denote the image by $\image(f)$.
We denote the left (right) limit of a function $f$ at $t$ as $f(t^-)$ ($f(t^+)$), and implicitly assume that the limit is well-defined when using this notation. For a continuous function $V: C \to \R$, $C \subseteq \R$, we denote the (upper) \emph{Dini-derivative} by
\begin{align*}
	&\tfrac{\diff}{\diff t}V(t)
	= \limsup\limits_{s \searrow 0} \tfrac{1}{s}(V(t+s)) - V(t).
\end{align*}

To introduce the notion of ISS-Lyapunov functions and corresponding ISS results later, we define the  \emph{comparison function classes} as follows: \emph{Class} $\mathcal{P}$ is the set of all continuous functions $\gamma:[0,\infty)\to[0,\infty)$, which satisfy $\gamma(0)=0$ and $\gamma(r)>0$ for all $r>0$. \emph{Class} $\mathcal{K}$ is the set of all continuous functions $\gamma:[0,\infty)\to[0,\infty)$, which are strictly increasing and $\gamma(0)=0$. \emph{Class} $\mathcal{K}_\infty$ is the subset of class $\mathcal{K}$ for which additionally $\gamma(s)\to\infty$ as $s\to\infty$. \emph{Class} $\mathcal{KL}$ is the set of all continuous functions $\beta:[0,\infty) \times [0,\infty) \to [0,\infty)$ for which $\beta(\ph,r)$ is class $\mathcal{K}$ for every fixed $r \geq 0$, and for each fixed $s > 0$, the mapping $r\mapsto\beta(s,r)$ is strictly decreasing and converges to zero as $r\to\infty$. 

\begin{definition}
    Let $t_0 \in \R$. A \emph{switching signal} is a piecewise-constant, left-continuous function $\sigma:[t_0,\infty)\to\mathcal{M}$, where $\mathcal{M}$ is some (finite or infinite) index set. The set $S = \braces{t_i}_{i \in \N} \subset (t_0,\infty)$ of discontinuities of $\sigma$ is called the set of \emph{switching instants} and it is assumed that the sequence $(t_i)_{i\in\N}$ is strictly increasing and unbounded, i.e., no accumulation towards a finite time (so-called Zeno behavior) is considered. The sequence $(p_i)_{i\in\N_0}$ such that $\sigma(t)=p_i$ on $[t_i,t_{i+1})$ is called \emph{mode sequence} of $\sigma$. For the interval $I := [t_0,\infty)$ and $p \in \mathcal M$, we denote by $I^{\sigma}_p$ the subset of $I$, on which mode $p$ of a given switching signal $\sigma$ is active, i.e.\ $I^{\sigma}_p:=\{t\in I\,\mid\, \sigma(t)=p\}$.
\end{definition}

In this paper, we consider \emph{impulsive switched systems} of the form:
\begin{align*}
    \Sigma\colon \left\{
    \begin{aligned}
        \dot{x}(t) &= f_{\sigma(t)}\!\para{t,x(t),u(t)}, \quad &&t \in I\setminus S, \\
        x(t_i^+) &=g_{\sigma(t_i^-)}\!\para{t_i^-,x(t_i^-),u(t_i^-)}, \quad &&t_i\in S,
    \end{aligned}
    \right.
\end{align*}
where $I = [t_0,\infty)$, $\sigma$ is a switching signal with corresponding set of switching instants $S$, $x\colon I \to X$, and $u \in U_c$. We assume that for every $D > 0$, functions $f_p,g_p\colon I\times X \times U\to X$ are locally Lipschitz continuous in the second argument, uniformly for all $t \in I$, $p \in \mathcal M$ and $u \in B_{U_c}(D)$. We call $x:I \to X$ a solution of $\Sigma$ for some given $\sigma$ and some input $u\in U_c$, if $x$ is locally absolutely continuous on $I\setminus S$ with well defined left- and right-limits at all $t_i\in S$ such that the equations of $\Sigma$ hold for almost all $t\in I\setminus S$ and all $t_i\in S$. Without loss of generality, we assume that every solution $x$ is right continuous and hence $x(t_i)=x(t_i^+)$. We furthermore assume that $f_p$ and $g_p$ are such that  $\Sigma$ is \emph{robustly forward complete} (or has bounded reachability sets), i.e., for all initial conditions $(t_0,x_0) \in \R \times X
$, $C > 0$, $D > 0$ and $\tau > 0$,
\begin{align}\label{eq:forwardCompleteness}
    K(C, D, \tau, \sigma) 
    := \hspace{-0.5mm}\sup_{x_0 \in B_X(C),\, u \in B_{U_c}(D),\, t \in [t_0,t_0 + \tau]} \!\! \norm{x(t;t_0, x_0, u, \sigma)}_X
\end{align}
is finite, where $x(t;t_0,x_0,u,\sigma)$ denotes the (unique) solution of $\Sigma$. We define a \emph{mode} of $\Sigma$ as a flow between two consecutive switching instants and the succeeding jump.

In this work, we want to study input-to-state stability (ISS) of $\Sigma$, which is formally defined as follows.
\begin{definition}\label{def:ISS}
	For a given switching sequence $\sigma$, we call system $\Sigma$ \emph{input-to-state stable (ISS)} if there exist functions $\beta \in \mathcal{KL}$ and $\gamma \in \mathcal{K}_\infty$ such that for all initial conditions $x_0 \in X$ and every input function $u \in U_c$, the system has a global solution, which satisfies for all $t \in [t_0, \infty)$,
	\begin{align*}
		\norm{x(t;t_0,x_0,u,\sigma)}_X \leq \beta\!\para{\norm{x_0}_X, t - t_0} + \gamma\!\para{\norm{u}_\infty}.
	\end{align*}
\end{definition}

For a given switching signal $\sigma$, let $N^\sigma_{p}(s_1,s_2)$ denote the number of times that mode $p$ is activated in the interval $(s_1,s_2]$ and $T^\sigma_p(s_1,s_2)$ denote the time that mode $p$ is active in interval $[s_1,s_2)$. 

We conclude this section by defining MDADT and MDALT for a switching signal; for this we assume that the mode set $\mathcal{M}$ is composed of two disjoint subsets  $\mathcal{S},\mathcal{U}\subseteq\mathcal{M}$ and we define MDADT only for modes in $\mathcal{S}$ and MDALT for modes in $\mathcal{U}$. For the definition of MDADT and MDALT, the decomposition of $\mathcal{M}$ can be arbitrary, but we will later choose a decomposition based on the stability\footnote{Stability here is with respect to a chosen, possibly time-varying, candidate ISS Lyapunov function; in particular, for a different choice of candidate ISS Lyapunov function, the stabilty classification of modes may be different.}  of the flows; see Definition~\ref{def:nonDecrLF}.

\begin{definition}\label{def:MDADT}
    Consider the modes $\mathcal S \subseteq \mathcal M$ and let $\{\tau_p\}_{p \in \mathcal S}, \tau_p \geq \tau > 0$. If for the switching signal $\sigma \colon I \to \mathcal M$, there exists a constant $T_{\mathcal S} \geq 0$, such that for all $s_1, s_2 \in I$, $s_1 \leq s_2$, the inequality
\begin{align}\label{ineq:MDADT}
        \sum_{p \in \mathcal S} N^{\sigma}_p(s_1,s_2)\tau_p - T^\sigma_p(s_1,s_2) \leq T_{\mathcal S}
    \end{align}
   holds true, then we say that $\sigma$ has MDADT $\{\tau_p\}_{p \in \mathcal S}$ (for all modes in $\mathcal{S}$), or short, $\sigma$ satisfies the MDADT condition \eqref{ineq:MDADT}.
\end{definition}

\begin{definition}\label{def:MDALT}
    Consider the modes $\mathcal U \subseteq \mathcal M$ and let $\{\tau_p\}_{p \in \mathcal U}, \tau_p \geq \tau > 0$. If for the switching signal $\sigma \colon I \to \mathcal M$, there exists a constant $T_{\mathcal U} \geq 0$, such that for all $s_1, s_2 \in I$, $s_1 \leq s_2$, the inequality
    \begin{align}\label{ineq:MDALT}
        \sum_{p \in \mathcal U} N^{\sigma}_p(s_1,s_2)\tau_p - T^\sigma_p(s_1,s_2) \geq  -T_{\mathcal U}
    \end{align}
    holds true, then we say that $\sigma$ has the  MDALT $\{\tau_p\}_{p \in \mathcal U}$ (for all modes in $\mathcal{U}$), or short, $\sigma$ satisfies the MDALT condition \eqref{ineq:MDALT}.
\end{definition}

\begin{remark}
    We highlight here that Definitions \ref{def:MDADT} and \ref{def:MDALT} are formulated to allow switched systems with an infinite number of modes, which is one of our contributions. For switched systems with finitely many modes and without impulsive effects, these definitions reduce to the classical dwell time/leave time conditions as introduced in  Definitions 4.1 and 4.2 of \cite{Yin2023}.
\end{remark}



\section{Sufficient condition for ISS}
\label{sec:ISS_result}
In this section, we provide one of our main results on ISS of system~$\Sigma$. But before we proceed, we provide the notion of a candidate ISS-Lyapunov functions as follows:

\begin{definition}\label{def:candLyapunovFunction}
    Consider the system $\Sigma$ with some given switching signal $\sigma$ and let $\widetilde V_p \in \Cont(I^{\sigma}_p \times X, \R^+_0)$ for $p \in \mathcal{M}$.
     We call $V_\sigma \in \mathcal{PC}(I \times X,\R_0^+)$ given by $V_\sigma(t,x) := \widetilde V_{\sigma(t)}(t,x)$ a \emph{candidate ISS-Lyapunov function (in implication form)} for system $\Sigma$ with switching signal $\sigma$, if it fulfills all of the following conditions:
	\begin{enumerate}
		\item \label{def:LyapFuncCondition1} There exist functions $\alpha_1, \alpha_2 \in \mathcal{K}_\infty$ such that
		\begin{align}
			\alpha_1\!\para{\norm{x}_X} \leq V_\sigma(t,x) \leq \alpha_2\!\para{\norm{x}_X} \label{ineq:LyapunovDefiniteness}
		\end{align}
		holds true for all $t \in I$ and all $x \in X$. 
		\item \label{def:LyapFuncCondition2} There exist functions $\chi \in \mathcal{K}_\infty$ and $\underline \varphi, \overline \varphi \in \PP$ such that for each $p \in \mathcal M$, there exist $\psi_p \in \PP$ and $\varphi_p \in \PP \cup - \PP$ with $\underline \varphi(s) \leq\abs{\varphi_p(s)} \leq \overline \varphi(s)$ $\forall s \in \R^+_0$ such that for all inputs $u \in U_c$ and all solutions $x(t) = x(t; t_0, x_0, u, \sigma)$ of $\Sigma$, the inequality
		\begin{align}
			&\tfrac{\diff}{\diff t}V_\sigma(t, x(t)) \leq \varphi_{\sigma(t)}\!\para{V_\sigma(t,x(t))}, \ t \in I \setminus S, \label{ineq:LyapunovFlow}
        \end{align}
        holds whenever $V_\sigma(t, x(t)) \geq \chi\!\para{\norm{u}_\infty}$, and
        \begin{align}
			&V_\sigma(t_i,x_i^+) \leq \psi_{\sigma(t_i^-)}\!\para{V_\sigma(t_i^-,x(t_i^-))},\ t_i \in S, \label{ineq:LyapunovJump}
		\end{align}
		holds, whenever $V_\sigma(t_i^-, x(t_i^-)) \geq \chi\!\para{\norm{u}_\infty}$, where $x_i^+=g_{\sigma(t_i^-)}(t_i^-,x(t_i^-), u(t_i^-))$.
		\item \label{def:LyapFuncCondition3} There exists a function $\alpha_3 \in \mathcal{K}$, such that for all $x \in X$, all $u \in U_c$, and all $i \in \N$, which satisfy $V_\sigma\!\para{t_i^-,x} < \chi\!\para{\norm{u}_\infty}$, the jump rule satisfies
		\begin{align}
	    	V_\sigma\!\para{t_i,g_{\sigma(t_i^-)}\!\para{t_i^-,x, u(t_i^-)}} &\leq \alpha_3\!\para{\norm{u}_\infty}. \label{ineq:LyapFuncCondition3}
		\end{align}
	\end{enumerate}
    Furthermore, we call $V_\sigma$ a \textit{candidate ISS-Lyapunov function in dissipation form} if \ref{def:LyapFuncCondition2}) is replaced by
    \begin{enumerate}
        \item[2')] There exist functions $\chi \in \mathcal{K}_\infty$ and $\underline \varphi, \overline \varphi \in \PP$, such that for each $p \in \mathcal M$, there exist $\psi_p \in \PP$ and $\varphi_p \in \PP \cup - \PP$ with $\underline \varphi(s) \leq\abs{\varphi_p(s)} \leq \overline \varphi(s)$ $\forall s \in \R^+_0$, such that for all inputs $u \in U_c$ and all solutions $x(t) = x(t; t_0, x_0, u, \sigma)$ of $\Sigma$, the inequalities
        \begin{align}
            \tfrac{\diff}{\diff t}V_\sigma(t, x(t)) &\leq \varphi_{\sigma(t)}\!\para{V_\sigma(t,x(t))} + \chi\!\para{\norm{u}_\infty},\  t \in I \setminus S, \label{ineq:LyapunovDissipationFlow} \\
            V_\sigma(t_i,x_i^+) &\leq \psi_{\sigma(t_i^-)}\!\para{V_\sigma(t_i^-,x(t_i^-))} + \chi\!\para{\norm{u}_\infty}, \ t_i \in S, \label{ineq:LyapunovDissipationJump}
        \end{align}
        hold true, where $x_i^+=g_{\sigma(t_i^-)}(t_i^-,x(t_i^-), u(t_i^-))$.
    \end{enumerate}
\end{definition}

Let us define the functions $\Phi_p, \underline \Phi, \overline \Phi\colon [0,\infty) \rightarrow \R \cup \{-\infty\}$ for $p \in \mathcal M$ as
\begin{align*}
 \Phi_p(v)& := \int_{1}^{v} \frac{1}{\abs{\varphi_p(s)}} \ \diff s, \quad
    \underline \Phi(v) := \int_{1}^{v} \frac{1}{\underline \varphi(s)} \ \diff s,\quad   
    \overline \Phi(v) := \int_{1}^{v} \frac{1}{\overline \varphi(s)} \ \diff s.
\end{align*}
Note that the functions $\Phi_p, \underline \Phi, \overline \Phi$ are all strictly increasing.
Therefore, their inverses $\Phi^{-1}_p: \image(\Phi_p) \rightarrow [0,\infty)$, $\underline \Phi^{-1}\colon \image(\underline \Phi) \rightarrow [0,\infty)$ and $\overline \Phi^{-1}\colon \image(\overline \Phi) \rightarrow [0,\infty)$ exist. Furthermore, note that if \eqref{ineq:LyapunovFlow} holds with a linear bound, i.e.\ $\tfrac{\diff}{\diff t}V_{\sigma}\leq \lambda_\sigma V_\sigma$, then $\Phi_p(v) = \frac{\ln v}{|\lambda_p|}$.

\begin{definition}\label{def:nonDecrLF}
    Let $\Sigma$ be a switched system with switching signal $\sigma$. Let $V_\sigma$ be a 
    candidate ISS-Lyapunov function for system $\Sigma$ with corresponding functions $\varphi_p, \psi_p$ as in Definition~\ref{def:candLyapunovFunction} and let $\mathcal{M} = \mathcal{S} \,\dot{\cup}\, \mathcal{U}$, such that $-\varphi_p \in \mathcal P$ for $p \in \mathcal S$ and $\varphi_p\in\mathcal{P}$ for all $p\in\mathcal{U}$. Furthermore, assume that $\sigma$ satisfies the MDADT condition \eqref{ineq:MDADT} for all modes in $\mathcal{S}$ and the MDALT condition \eqref{ineq:MDALT} for all modes in $\mathcal{U}$ with corresponding dwell/leave times $\{\tau_p\}_{p \in \mathcal{S}}$ and $\{\tau_p\}_{p \in \mathcal{U}}$. If there exists $\delta > 0$ such that for all $a > 0$ and every switching time $t_i \in S$, one of the following two inequalities is satisfied:
    \begin{align} \label{ineq:dwellTimeConditionSFUJ}
        \Phi_{\sigma(t_i)}(\psi_{\sigma(t_i^-)}(a)) - \Phi_{\sigma(t_i^-)}(a) \leq \tau_{\sigma(t_i^-)}(1 - \delta),
    \end{align}
    if $\sigma(t_i^-)\in\mathcal{S}$, or
    \begin{align} \label{ineq:dwellTimeConditionUFSJ}
        -\Phi_{\sigma(t_i)}(\psi_{\sigma(t_i^-)}(a)) + \Phi_{\sigma(t_i^-)}(a) \geq \tau_{\sigma(t_i^-)}(1 + \delta),
    \end{align}
    if $\sigma(t_i^-)\in\mathcal{U}$, then we call $V_\sigma$ a \emph{non-decreasing ISS-Lyapunov function}.
\end{definition}

\begin{remark}
    We emphasize here that with ``non-decreasing'' we mean ``not necessarily decreasing'', i.e.\ in contrast to classical Lyapunov functions, the value of the non-decreasing Lyapunov function is allowed to increase along a solution trajectory (cf.~\cite{DefoDjem14}).   
\end{remark}

Now we are ready to provide our result on ISS of impulsive switched system $\Sigma$.
\begin{theorem}\label{thm:ISS}
    Consider the system $\Sigma$ for a given switching signal $\sigma$. If there exists a non-decreasing ISS-Lyapunov function as given in Definition \ref{def:nonDecrLF},
    then $\Sigma$ is ISS.
\end{theorem}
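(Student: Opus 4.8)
The plan is to work entirely at the level of the candidate Lyapunov function and translate back to the state norm only at the very end via \eqref{ineq:LyapunovDefiniteness}. Abbreviate $V(t) := V_\sigma(t,x(t))$, $c := \chi\para{\norm{u}_\infty}$, and let $p_i := \sigma$ on $[t_i,t_{i+1})$. First I would establish the \emph{core decay estimate valid while $V\ge c$}. On a flow interval where $V(t)\ge c$, the definition of $\Phi_p$ gives $\tfrac{\diff}{\diff t}\Phi_p(V(t)) = \tfrac{1}{\abs{\varphi_p(V)}}\tfrac{\diff}{\diff t}V(t)$, so \eqref{ineq:LyapunovFlow} together with $-\varphi_p\in\PP$ for $p\in\mathcal S$ and $\varphi_p\in\PP$ for $p\in\mathcal U$ yields $\tfrac{\diff}{\diff t}\Phi_{p_i}(V(t))\le -1$ on stable modes and $\le +1$ on unstable modes (a standard Dini-derivative comparison through the monotone $C^1$ transformation $\Phi_{p_i}$). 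Integrating over $[t_i,t_{i+1})$, combining with the jump bound \eqref{ineq:LyapunovJump}, and inserting the dwell/leave conditions \eqref{ineq:dwellTimeConditionSFUJ}--\eqref{ineq:dwellTimeConditionUFSJ} evaluated at $a=V(t_{i+1}^-)$, I obtain, in $\Phi$-coordinates $\Psi_i:=\Phi_{p_i}(V(t_i^+))$, the one-step recursion
\begin{align*}
\Psi_{i+1}-\Psi_i \leq \begin{cases} -(t_{i+1}-t_i)+\tau_{p_i}(1-\delta), & p_i\in\mathcal S,\\ (t_{i+1}-t_i)-\tau_{p_i}(1+\delta), & p_i\in\mathcal U.\end{cases}
\end{align*}

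Next I would telescope this recursion over $[t_0,t_n]$ and feed in the switching constraints. Writing $S,U$ for the total time spent in stable resp.\ unstable modes and $A,B$ for the corresponding sums $\sum\tau_{p_i}$, summation gives $\Psi_n-\Psi_0\le -S+(1-\delta)A+U-(1+\delta)B$, while the MDADT bound \eqref{ineq:MDADT} gives $A\le S+T_{\mathcal S}$ and the MDALT bound \eqref{ineq:MDALT} gives $U\le B+T_{\mathcal U}$. The decisive algebraic step — converting decay ``per switch'' into decay in \emph{real time} — is to use these two bounds to estimate $-S+(1-\delta)A\le -\delta S+(1-\delta)T_{\mathcal S}$ and $U-(1+\delta)B\le -\delta U+(1+\delta)T_{\mathcal U}$ (here one may take $\delta\in(0,1)$ without loss of generality, since both dwell/leave conditions only get weaker as $\delta$ shrinks). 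Adding these yields the clean estimate
\begin{align*}
\Phi_{p_n}(V(t_n^+)) \leq \Phi_{p_0}(V(t_0^+)) - \delta\,(t_n-t_0) + (1-\delta)T_{\mathcal S}+(1+\delta)T_{\mathcal U},
\end{align*}
and an analogous bound for arbitrary $t\in[t_n,t_{n+1})$ by appending one more flow step. Sandwiching $\Phi_p$ between $\underline\Phi$ and $\overline\Phi$ (legitimate because $\underline\varphi\le\abs{\varphi_p}\le\overline\varphi$) and applying the increasing inverse then produces $V(t)\le\beta_0(V(t_0),t-t_0)$ with $\beta_0\in\mathcal{KL}$, valid as long as the trajectory has stayed in $\{V\ge c\}$.

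The third, and I expect hardest, step is the \emph{threshold/overshoot analysis} producing the input gain. Because \eqref{ineq:LyapunovFlow}--\eqref{ineq:LyapunovJump} are only assumed where $V\ge c$, one has no direct control of the flow once $V$ dips below $c$, and during an unstable mode the flow may carry $V$ back across the threshold. I would argue as follows: a jump from $V(t_i^-)<c$ lands at $V(t_i)\le\alpha_3\para{\norm{u}_\infty}$ by Condition~\ref{def:LyapFuncCondition3}, whereas a jump from $V(t_i^-)\ge c$ is controlled by $\psi_p$ as above; and on any flow sub-interval the value can exceed $c$ only by a bounded amount, since at the last up-crossing of the level $c$ the core estimate applies with initial value exactly $c$ over a single dwell window. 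Making this precise — tracking the last time the trajectory left $\{V<c\}$, bounding the worst-case overshoot uniformly over the (possibly infinitely many) modes through $\underline\Phi,\overline\Phi$, and showing that the trajectory is ultimately trapped in a sublevel set of size $\tilde\gamma\para{\norm{u}_\infty}$ — is the crux, and the place where robust forward completeness \eqref{eq:forwardCompleteness} and the uniform comparison functions $\underline\varphi,\overline\varphi,\chi,\alpha_3$ must be used together.

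Finally I would assemble the two regimes via the maximisation $V(t)\le\max\{\beta_0(V(t_0),t-t_0),\,\tilde\gamma\para{\norm{u}_\infty}\}$, absorb the maximum into a single $\mathcal{KL}$-plus-$\mathcal K_\infty$ pair (using $\max\{a,b\}\le a+b$ and $\beta_0(\alpha_2(\cdot),\cdot)\in\mathcal{KL}$), and translate from $V$ to the state norm through \eqref{ineq:LyapunovDefiniteness}, i.e.\ $\norm{x(t)}_X\le\alpha_1^{-1}(V(t))$ and $V(t_0)\le\alpha_2\para{\norm{x_0}_X}$. This delivers the estimate of Definition~\ref{def:ISS} with $\beta:=\alpha_1^{-1}\circ\beta_0(\alpha_2(\cdot),\cdot)\in\mathcal{KL}$ and a suitable $\gamma\in\mathcal K_\infty$, proving that $\Sigma$ is ISS.
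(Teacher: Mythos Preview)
Your strategy matches the paper's proof almost exactly: the same $\Phi$-coordinate transformation, the same telescoping of flow and jump increments, the same use of MDADT/MDALT to convert $\sum\tau_{p_i}$ into real time, arriving at the identical key estimate $\Phi_{\sigma(t)}(V(t))-\Phi_{\sigma(t_0)}(V(t_0))\le -\delta(t-t_0)+(1-\delta)T_{\mathcal S}+(1+\delta)T_{\mathcal U}$.

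Two places where the paper's execution differs from your outline are worth flagging. First, the sentence ``applying the increasing inverse then produces $V(t)\le\beta_0(V(t_0),t-t_0)$ with $\beta_0\in\mathcal{KL}$'' hides a genuine difficulty: for $t-t_0<C/\delta$ the right-hand side $-\delta(t-t_0)+C$ is positive, and $\overline\Phi^{-1}(\overline\Phi(v_0)+C)$ need not be finite (nothing forbids $\int_1^\infty\tfrac{1}{\overline\varphi}<\infty$). The paper resolves this not in the overshoot step but \emph{here}, by invoking robust forward completeness through Lemma~\ref{lem:LipschitzContinuousFlow} (Lipschitz dependence on initial data) to bound $\norm{x(t)}_X$ directly on the short window $[t_0,t_0+C/\delta]$; the inverted $\Phi$-estimate is only used once $t-t_0\ge C/\delta$, where it already gives $V(t)\le V(t_0)$. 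So forward completeness enters the $\beta$-construction, not the threshold analysis. Second, what you call ``the crux'' is in fact the short part of the paper's argument: once $\beta$ is in hand, a trajectory that exits $\{V<c\}$ does so either through the boundary $\{V=c\}$ (by flow) or by a jump landing at $V\le\alpha_3(\norm{u}_\infty)$ (by Condition~\ref{def:LyapFuncCondition3}); in either case it is momentarily in $\{V\le\gamma_2(\norm{u}_\infty)\}$ with $\gamma_2:=\max\{\chi,\alpha_3\}$, and simply re-applying the already-established $\beta$-bound from that instant traps it in a fixed sublevel set $\{V\le\gamma_3(\norm{u}_\infty)\}$ for all future times. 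No separate ``single dwell window'' overshoot estimate is needed.
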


\begin{proof}
    For an arbitrary but fixed input signal $u\in U_c$, consider the set 
\begin{align*}
    A_1(t) := \braces{x \in X \, \middle| \, V_\sigma(t,x) < \chi\!\para{\norm{u}_{\infty}}}.
\end{align*}
Our proof consists of two steps.
At first, we show that for all initial conditions $x_0$ outside $A_1(t_0)$, we have a convergent behavior towards $A_1(t)$, i.e., there exists a $\mathcal{KL}$-function $\beta$, such that the inequality
\begin{equation}\label{eq:beta_proof}
    {\norm{x(t;t_0,x_0,u, \sigma)}_X \leq \beta\!\para{\norm{x_0}_X\!, t - t_0}}
\end{equation}
holds true for all $t\in[t_0,t_*)$, where $t_*>t_0$ is  such that $x(t) \notin A_1(t)$ for all $t \in [t_0,t_*)$. Second, we show that trajectories, once they have reached the set $A_1$, will stay bounded.

\textbf{Step 1}: Let $t_* = \inf\!\braces{t \in [t_0,\infty] \, \middle| \, x(t) \in A_1(t)}$, i.e., $V_\sigma(t,x(t)) \geq \chi\!\para{\norm{u}_\infty}$ holds for all $t \in [t_0, t_*)$ and hence \eqref{ineq:LyapunovFlow} and \eqref{ineq:LyapunovJump} are satisfied on $[t_0,t_*)$.
For brevity, define $v(t) := V_\sigma\!\para{t,x(t)}$ and denote $v_i := v(t_i)$ and $v^-_i := v(t_i^-)$. Let $t'=t^*$ if $v(t)\neq 0$ for all $t\in[t_0,t^*)$ or $t' = \min \{t\geq t_0\,|\,v(t)=0\}$ otherwise. In the following, we will show that \eqref{eq:beta_proof} holds for all $t\in[t_0,t')$. If $t'<t_*$, then it follows from positive definiteness of $V_\sigma$ that $x(t')=0$. Furthermore, from $0=v(t')=V_\sigma(t',x(t'))\geq \xi(\|u\|_\infty)$, it follows that $u$ must be identically zero, which then implies that $x(t)=0$ for all $t\geq t'$. Consequently, any extension of a $\mathcal{KL}$-function $\beta$ in \eqref{eq:beta_proof} onto the interval $[t_0,t_*)$ also makes \eqref{eq:beta_proof} true on the whole interval $[t_0,t_*)$.
Now we consider, the behavior on $[t_0,t')$ on which $v(t)\neq 0$. Then for $p \in \mathcal M$, the inequality \eqref{ineq:LyapunovFlow} becomes
\begin{align}\label{ineq:transformIDEforV_flow}
    \frac{\frac{\diff}{\diff t}v(t)}{\abs{\varphi_p\!\para{v(t)}}}
    \leq\frac{\varphi_p\!\para{v(t)}}{\abs{\varphi_p\!\para{v(t)}}} = 
    \begin{cases}
        1,  &\text{if } \varphi_p\!\para{v(t)} > 0, \\
        -1, &\text{if } \varphi_p\!\para{v(t)} < 0,
    \end{cases}
\end{align}
for all $t\in I_p^\sigma\cap [t_0,t')$.

Now, we will estimate $\Phi_{\sigma(t)}(v(t)) - \Phi_{\sigma(0)}(v_0)$ to conclude that it is bounded by a class $\mathcal{KL}$ function. To this aim, we first estimate the behavior between the switching instants and at the switching instants, separately.

Integrating \eqref{ineq:transformIDEforV_flow} over the interval $[t_i,\hat{t}]$, $i \in \N_0$, for some $\hat{t} \in [t_i,t_{i+1}) \cap [t_0,t')$, we obtain
\begin{align*}
    \int_{v_i}^{v(\hat{t})} \frac{1}{\abs{\varphi_p(s)}}\ \diff s
    &= \int_{t_i}^{\hat{t}} \frac{\frac{\diff}{\diff t} v(t)}{\abs{\varphi_p\!\para{v(t)}}}\ \diff t
    \leq 
    \begin{cases}
        -(\hat{t} - t_i), &\text{if } p \in \mathcal S, \\
        \hat{t} - t_i,  &\text{if } p \in \mathcal U,
    \end{cases}
\end{align*}
where we used the integration parameter change $s \to v(t)$. It follows that
\begin{align}
    \Phi_p(v(\hat{t})) - \Phi_p(v_i) 
    &\leq 
    \begin{cases}
        -T^\sigma_p(t_i,\hat{t}), &\text{if } p \in \mathcal S, \\
        T^\sigma_p(t_i,\hat{t}),  &\text{if } p \in \mathcal U.
    \end{cases}
    \label{ineq:estimateFlowVariableUpperBound}
\end{align}

For the switching instants $t_i \in [t_0,t')$, the inequality
\begin{align}
    &\Phi_{\sigma(t_i)}(v_i)  -  \Phi_{\sigma(t^-_i)}(v^-_i) \leq \Phi_{\sigma(t_i)}(\psi_{\sigma(t_i^-)}\para{v^-_i})  -  \Phi_{\sigma(t^-_i)}(v^-_i) 
    \leq \tau_{\sigma(t_i^-)}(1 - \delta) \label{ineq:estimateswitchingVariableUpperBoundSFUJ}
\end{align}
holds for $\sigma(t_i^-) \in \mathcal S$. Here, we used \eqref{ineq:LyapunovJump} in the first inequality and \eqref{ineq:dwellTimeConditionSFUJ} in the second. Analogously, from \eqref{ineq:dwellTimeConditionUFSJ}, we obtain
\begin{align}
    &\Phi_{\sigma(t_i)}(v_i)  -  \Phi_{\sigma(t^-_i)}(v^-_i) \leq \Phi_{\sigma(t_i)}(\psi_{\sigma(t_i^-)}\para{v^-_i})  -  \Phi_{\sigma(t^-_i)}(v^-_i) 
    \leq -\tau_{\sigma(t_i^-)}(1 + \delta) \label{ineq:estimateswitchingVariableUpperBoundUFSJ}
\end{align}
for $\sigma(t_i^-) \in \mathcal U$.

Let $n = n(t) := \sum_{p \in M} N^{\sigma}_p(t_0,t)$.
With the estimates \eqref{ineq:estimateFlowVariableUpperBound}, \eqref{ineq:estimateswitchingVariableUpperBoundSFUJ}, and \eqref{ineq:estimateswitchingVariableUpperBoundUFSJ} at hand, we obtain
\begin{align}
    \Phi_{\sigma(t)}(v(t)) - \Phi_{\sigma(0)}(v_0)  &= \Phi_{\sigma(t)}(v(t)) - \Phi_{\sigma(t_n)}(v_n) + \sum\limits_{i = 1}^n \Phi_{\sigma(t_i)}(v_i)  -  \Phi_{\sigma(t^-_i)}(v^-_i)  +\Phi_{\sigma(t^-_i)}(v^-_i)  -  \Phi_{\sigma(t_{i - 1})}(v_{i - 1}) \nonumber\\
    &\leq \sum_{p \in \mathcal S} N^{\sigma}_p(t_0,t)\tau_p (1 - \delta) -T^\sigma_p(t_0,t)   + \sum_{p \in \mathcal U}  - N^{\sigma}_p(t_0,t)\tau_p (1 + \delta) + T^\sigma_p(t_0,t), \label{ineq:totalEstimateAsymptoticStability1}
\end{align}
where we separated the stable and the unstable modes, and took into account that $\sigma(t_{i-1})=\sigma(t_i^-)$. The first sum can be bounded by the MDADT condition given in Definition~\ref{def:MDADT} as
\begin{align}
    \sum_{p \in \mathcal S}N^{\sigma}_p(t_0,t)\tau_p(1 - \delta) - T^\sigma_p(t_0,t) 
    &= (1 - \delta)\!\para{\sum_{p \in \mathcal S} N^{\sigma}_p(t_0,t) \tau_p - T^\sigma_p(t_0,t) } - \delta \sum_{p \in \mathcal S} T^\sigma_p(t_0,t) \nonumber\\
    &\leq (1 - \delta) T_{\mathcal S} - \delta \sum_{p \in \mathcal S} T^\sigma_p(t_0,t) \label{ineq:estimateMDADT}
\end{align}
for $p \in \mathcal S$; and for $p \in \mathcal U$ the MDALT condition in Definition \ref{def:MDALT} implies
\begin{align}
    \sum_{p \in \mathcal U}  - N^{\sigma}_p(t_0,t)\tau_p (1 + \delta) + T^\sigma_p(t_0,t)
    &= (1 + \delta)\para{\sum_{p \in \mathcal U} - N^{\sigma}_p(t_0,t) \tau_p + T^\sigma_p(t_0,t) } - \delta \sum_{p \in \mathcal U} T^\sigma_p(t_0,t) \nonumber\\
    &\leq (1 + \delta) T_{\mathcal U} - \delta \sum_{p \in \mathcal U} T^\sigma_p(t_0,t) \label{ineq:estimateMDALT}.
\end{align}

From \eqref{ineq:totalEstimateAsymptoticStability1}, \eqref{ineq:estimateMDADT}, and \eqref{ineq:estimateMDALT}, it follows that
\begin{align}
    \Phi_{\sigma(t)}(v(t)) - \Phi_{\sigma(0)}(v_0) 
    & \leq (1 - \delta) T_{\mathcal S} - \delta \!\sum_{p \in \mathcal S} T^\sigma_p(t_0,t) + (1 + \delta) T_{\mathcal U} - \delta \!\sum_{p \in \mathcal U} T^\sigma_p(t_0,t)\nonumber\\
    &= - \delta (t - t_0) + (1 - \delta) T_{\mathcal S} + (1 + \delta) T_{\mathcal U}. \label{ineq:estimateF} 
\end{align}
This means that $\Phi_{\sigma(t)}(v(t)) - \Phi_{\sigma(0)}(v_0)$ is linearly decreasing in $t$ for $t \in [t_0,t')$.

As $\underline \varphi(x) \leq \abs{\varphi_p(x)} \leq \overline \varphi(x)$, it holds that 
\begin{align}\label{innen:estimatePhiLowerAndUpper}
    \underline \Phi(k) - \underline \Phi(l) \geq \Phi_p(k) - \Phi_q(l) \geq \overline \Phi(k) - \overline \Phi(l)
\end{align}
for all $p, q \in \mathcal M$ and all $k \geq l$, $k, l \in \R^+_0$. Then from \eqref{ineq:estimateF}, we obtain 
\begin{align}
    &\underline \Phi(v(t)) - \underline \Phi(v_0) 
    \leq \Phi_{\sigma(t)}(v(t)) - \Phi_{\sigma(0)}(v_0) \leq - \delta (t - t_0) + C \qquad \text{if }v(t) \leq v_0 \label{ineq:estimatePhiLower}\\
    &\overline \Phi(v(t)) - \overline \Phi(v_0) 
    \leq \Phi_{\sigma(t)}(v(t)) - \Phi_{\sigma(0)}(v_0) \leq - \delta (t - t_0) + C \qquad\text{if }v(t) \geq v_0, \label{ineq:estimatePhiUpper}
\end{align}
where $C := (1 - \delta) T_{\mathcal S} + (1 + \delta) T_{\mathcal U}$.

Next, we distinguish between two cases: Case 1: For the case $\inf(\image \underline \Phi) = m > - \infty$, we set $\overline \Phi^{-1}\!\para{s} = 0$ if $s \leq \inf(\image \overline \Phi)$, and define 
\begin{align*}
    \tilde \beta(r,s) := \max\!\left\{\underline \Phi^{-1}\!\!\para{\Gamma\!\para{\underline \Phi\!\para{r}, \delta s}}\!,
    \overline \Phi^{-1}\!\!\para{\overline \Phi\!\para{r} + C - \delta s}\!\right\},
\end{align*}
where
\begin{align*}
    \Gamma(u,v) &:= u + C - (u + C-m)\para{1 - \exp\!\para{\frac{-v}{u + C - m}}}  \geq u + C - v.
\end{align*}

Case 2: If $\inf(\image \underline \Phi) = - \infty$, we define
\begin{align*}
    \tilde \beta(r,s) := \max\!\left\{\underline \Phi^{-1}\!\!\para{\underline \Phi\!\para{r} + C - \delta s}\!,\overline \Phi^{-1}\!\!\para{\overline \Phi\!\para{r} + C - \delta s}\!\right\}\!.
\end{align*}

We made this case distinction because we want $\tilde \beta$ to be strictly falling to zero in the second argument. Note that $\tilde \beta$ is strictly increasing in the first argument.

Thus, from \eqref{ineq:estimatePhiLower} and \eqref{ineq:estimatePhiUpper}, it follows that
\begin{align*}
    v(t) \leq \tilde \beta\!\para{v_0,t  -t_0}
\end{align*}
for all $t \in [t_0,t')$.


Obviously, $\tilde \beta$ is continuous and strictly decreasing to zero for $s \to \infty$ by the definition of $\overline \Phi$ and  $\underline \Phi$. However, it is a priori not clear if $\tilde \beta$ is bounded, and therefore, we cannot conclude that $\tilde \beta \in \KK\LL$. Note that for $t - t_0 \geq \tau := \frac{C}{\delta}$, it follows that  $\tilde \beta(r,t - t_0) \leq r$. It remains to show that we can bound $v(t) = V_\sigma(t,x(t))$ for $t - t_0\leq \tau$.


To this end, from Lemma \ref{lem:LipschitzContinuousFlow},  for $V_{\sigma(t_0)}(t_0,x(t_0)) \geq \chi\!\para{\norm{u}_\infty}$, i.e.,  
\begin{align*}
    \norm{u}_\infty \leq \chi^{-1}(V_{\sigma(t_0)}(t_0,x(t_0))) \leq \chi^{-1}(\alpha_2(\norm{x_0}_X)),
\end{align*}
it follows that
\begin{align*}
    &\max_{t \in [t_0,\tau]}\norm{x(t)}_X \leq L(\norm{x_0}_X,\chi^{-1}(\alpha_2(\norm{x_0}_X)),\tau, \sigma) \norm{x_0}_X.
\end{align*}
Therefore, for $t \in [t_0,t')$ and $t - t_0 \leq \tau$, $\norm{x(t;t_0,x_0,u,\sigma)}$ can be bounded by a $\KK_\infty$-function in $x_0$, uniformly in $t - t_0$. Hence, there exists $\beta \in \mathcal{KL}$ defined by $\beta(r,s) := \alpha_1^{-1}\!\para{\tilde{\beta}\!\para{\alpha_2(r),s}}$ for $t \geq \tau$, such that
\begin{align} \label{ineq:asymptoticStabilityOfTrajectory}
    \norm{x(t;t_0,x_0,u,\sigma)}_X \leq \beta\!\para{\norm{x_0}_X\!, t - t_0}
\end{align}
for $t \in [t_0, t')$ (and by the above argument then also on $[t_0,t_*]$). Note that the constructed $\beta$ is actually independent of $t_0$, i.e.\ the same bound can be used also on a later time interval $[t_1,t_2)$ with $t_2>t_1>t_0$, such that $x(t)\notin A_1(t)$ for all $t\in[t_1,t_2)$.

\textbf{Step 2}: Next, we  show that trajectories that reach to $A_1(t)$ at any time $t \in I$, stay bounded for all times. We define the sets
\begin{align*}
    A_2(t) &:= \braces{x \in X \,\middle| \, V_\sigma(t,x) \leq \gamma_2\!\para{\norm{u}_\infty}}\!, \\
    A_3(t) &:= \braces{x \in X \,\middle| \, V_\sigma(t,x) \leq \gamma_3\!\para{\norm{u}_\infty}},
\end{align*}
where $\gamma_2, \gamma_3 \in \mathcal{K}_\infty$ and are defined by
\begin{align*}
    \gamma_2(s) &= \max\!\braces{\alpha_3(s), \chi(s)}, \\
    \gamma_3(s) &= \max\!\braces{\gamma_2(s), \alpha_2\!\para{\beta\!\para{\alpha_1^{-1}\!\para{\gamma_2(s)}\!,0}}}.
\end{align*}
Obviously, $A_1(t) \subseteq A_2(t) \subseteq A_3(t)$ for all $t \in [t_0,\infty)$, $p \in \mathcal M$. Trajectories leaving $A_1(t)$ by flow have to cross the boundary $\partial A_1(t)$ and trajectories leaving $A_1(t)$ by jump only reach to $A_2(t)$ due to condition \eqref{ineq:LyapunovJump}. In both cases, there exists a time $t' \in [t, \infty)$, such that $x(t') \subseteq A_2(t') \setminus A_1(t'))$. Therefore, we can apply \eqref{ineq:asymptoticStabilityOfTrajectory} combined with \eqref{ineq:LyapunovDefiniteness}, where $t = t_0 = t'$. As a consequence, all the trajectories that leave $A_1(t)$ will stay in $A_3(t)$.

Next, we define $\gamma \in \mathcal{K}_\infty$, $\gamma :=\alpha_1^{-1} \circ \gamma_3$. Then $\norm{x(t;t_0,x_0,u,\sigma)}_X \leq \gamma\!\para{\norm{u}_\infty}$
holds for all $t > t_*$.
From this equation and \eqref{ineq:asymptoticStabilityOfTrajectory}, we can conclude
\begin{align*}
    \norm{x(t;t_0,x_0,u,\sigma)}_X \leq \beta\para{\norm{x_0}_X\!, t - t_0} + \gamma\!\para{\norm{u}_\infty},
\end{align*}
just as desired.
\end{proof}

\begin{remark}\label{rem:linearDwellTimeCond}
    Note that in the case of an impulsive system with only one stable flow and one unstable jump, condition~\eqref{ineq:dwellTimeConditionSFUJ} becomes
    \begin{align*} 
        \int_a^{\psi(a)}\frac{1}{-\varphi(s)}\,\diff s
        \leq \tau(1 - \delta),
    \end{align*}
    where $\varphi_p = \varphi$, $\psi_p = \psi$ as there is only one mode $p \in \mathcal S$. Conversely, for the case of an impulsive system with only one unstable flow and one stable jump, i.e., a single mode $p \in \mathcal U$, condition~\eqref{ineq:dwellTimeConditionUFSJ} reduces to
    \begin{align*} 
        \int_a^{\psi(a)}\frac{1}{-\varphi(s)} \,\diff s
        \geq \tau(1 + \delta).
    \end{align*}
   Thus, the conditions \eqref{ineq:dwellTimeConditionSFUJ} and \eqref{ineq:dwellTimeConditionUFSJ}, for the case of an impulsive system, boils down to the dwell time conditions in \cite{Dashkovskiy2013}.  
   
   Finally, let us discuss the case that the rate functions $\varphi_p(s)$ and $\psi_p(s)$ are linear, i.e.,  $\varphi_p(s) = \eta_p \cdot s$, $\eta_p \in \R\setminus \{0\}$, and $\psi_p(s) = \mu_p \cdot s$, $\mu_p > 0$. Then, it follows that
    \begin{align*}
        \Phi_p(v) 
        = \int_{1}^{v} \frac{1}{\abs{\varphi_p(s)}} \ \diff s 
        = \int_{1}^{v} \frac{1}{\abs{\eta_p}s} \ \diff s 
        = \frac{1}{\abs{\eta_p}} \ln v.
    \end{align*}
    Therefore, in the case $p \in \mathcal S$, \eqref{ineq:dwellTimeConditionSFUJ} reduces to
    \begin{align*}
        \frac{\ln(\tilde \mu_{\sigma(t_i^-)})}{\abs{\eta_{\sigma(t_i^-)}}} 
        &= \frac{1}{\abs{\eta_{\sigma(t_i)}}} \ln(\mu_{\sigma(t_i^-)} \cdot a) - \frac{1}{\abs{\eta_{\sigma(t_i^-)}}} \ln(a) \leq \tau_{\sigma(t_i^-)}(1 - \delta) < \tau_{\sigma(t_i^-)},
    \end{align*}
    where we define $\tilde \mu_{\sigma(t_i^-)} := \mu_{\sigma(t_i^-)}e^{\abs{\eta_{\sigma(t_i^-)}}-\abs{\eta_{\sigma(t_i)}}}$.
    Conversely for $p \in \mathcal U$, from \eqref{ineq:dwellTimeConditionUFSJ}, it follows that
    \begin{align*}
        -\frac{\ln(\tilde \mu_{\sigma(t_i^-)})}{\abs{\eta_{\sigma(t_i^-)}}} 
        &= -\frac{1}{\abs{\eta_{\sigma(t_i)}}} \ln(\mu_{\sigma(t_i^-)} \cdot a) + \frac{1}{\abs{\eta_{\sigma(t_i^-)}}} \ln(a) \geq \tau_{\sigma(t_i^-)}(1 + \delta) > \tau_{\sigma(t_i^-)}.
    \end{align*}
    Thus, the conditions \eqref{ineq:dwellTimeConditionSFUJ} and \eqref{ineq:dwellTimeConditionUFSJ}, for the case of a switched system and non-decreasing ISS-Lyapunov functions with linear rates, boils down to the dwell time conditions in \cite{Yin2023}.  
\end{remark}

\section{Sufficient and necessary condition for ISS}
\label{sec:nece_and_suff}
In this section, we first introduce a more restrictive characterization of ISS-Lyapunov functions, i.e., a decreasing ISS-Lyapunov function as defined below:
\begin{definition}\label{def:decrLF}
    Let $V_\sigma$ be a candidate ISS-Lyapunov function. If for each $p \in \mathcal{M}$, it holds that $\varphi_p \in - \PP$ and $\psi_p \leq \id$, we call $V_\sigma$ a \emph{(decreasing) ISS-Lyapunov function}, which we denote by $W_\sigma$.
\end{definition}
Then, in the following, we prove that the existence of such a time-varying decreasing ISS-Lyapunov function is not only a sufficient but also a necessary condition for ISS of system $\Sigma$. 
\begin{corollary}\label{cor:LFtoISS}
    If there exists an ISS-Lyapunov function for system $\Sigma$ for a given switching signal $\sigma$ as in Definition~\ref{def:decrLF}, 
    then $\Sigma$ is ISS.
\end{corollary}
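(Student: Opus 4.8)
The plan is to recognise a decreasing ISS-Lyapunov function $W_\sigma$ as a special instance of a non-decreasing ISS-Lyapunov function and then to invoke Theorem~\ref{thm:ISS}. Concretely, I would take the decomposition $\mathcal{S} = \mathcal{M}$ and $\mathcal{U} = \emptyset$ in Definition~\ref{def:nonDecrLF}. Since Definition~\ref{def:decrLF} guarantees $\varphi_p \in -\mathcal{P}$ for every $p \in \mathcal{M}$, the required sign condition $-\varphi_p \in \mathcal{P}$ on $\mathcal{S}$ holds for all modes, while the condition on $\mathcal{U}$ is vacuous.

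The first substantive step is to check the switching-time condition \eqref{ineq:dwellTimeConditionSFUJ} (condition \eqref{ineq:dwellTimeConditionUFSJ} being void as $\mathcal{U} = \emptyset$). Fix a switching instant $t_i$, write $p = \sigma(t_i^-)$, $q = \sigma(t_i)$, and set $b := \psi_p(a)$, so that $b \leq a$ because $\psi_p \leq \id$. As $a \geq b$, the mode-uniform comparison \eqref{innen:estimatePhiLowerAndUpper} yields $\Phi_p(a) - \Phi_q(b) \geq \overline\Phi(a) - \overline\Phi(b) \geq 0$, the last inequality holding since $\overline\Phi$ is increasing. Hence $\Phi_{q}(\psi_p(a)) - \Phi_{p}(a) \leq 0 \leq \tau_p(1-\delta)$ for every $\delta \in (0,1)$ and every admissible $\tau_p > 0$, so \eqref{ineq:dwellTimeConditionSFUJ} is satisfied trivially.

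The main obstacle is the remaining requirement in Definition~\ref{def:nonDecrLF} that $\sigma$ obey the MDADT condition \eqref{ineq:MDADT} for the chosen $\{\tau_p\}$; this may fail for a merely non-Zeno $\sigma$ (e.g.\ when the inter-switching times tend to zero), so a naive appeal to Theorem~\ref{thm:ISS} is not quite legitimate. I would circumvent this by re-running Step~1 of the proof of Theorem~\ref{thm:ISS} without ever invoking the MDADT bound. With $\mathcal{U} = \emptyset$ the flow estimate \eqref{ineq:estimateFlowVariableUpperBound} still gives $\Phi_p(v(\hat t)) - \Phi_p(v_i) \leq -T^\sigma_p(t_i,\hat t)$, while the jump estimate now simplifies, by the computation above, to $\Phi_{\sigma(t_i)}(v_i) - \Phi_{\sigma(t_i^-)}(v^-_i) \leq 0$. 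Telescoping over the switching instants then produces $\Phi_{\sigma(t)}(v(t)) - \Phi_{\sigma(0)}(v_0) \leq -\sum_{p} T^\sigma_p(t_0,t) = -(t-t_0)$, i.e.\ the decay estimate \eqref{ineq:estimateF} holds with $C = 0$ and unit decay rate. Consequently $\tilde\beta(r,s) \leq r$ for all $s \geq 0$, the auxiliary Lipschitz bound on $[t_0,\tau]$ becomes superfluous, and the construction of the $\mathcal{KL}$-estimate \eqref{ineq:asymptoticStabilityOfTrajectory} goes through unchanged.

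Finally, Step~2 of the proof of Theorem~\ref{thm:ISS} carries over verbatim: because $W_\sigma$ is a candidate ISS-Lyapunov function, its jump rule \eqref{ineq:LyapunovJump} together with condition \eqref{ineq:LyapFuncCondition3} keep every trajectory that enters $A_1(t)$ confined to $A_3(t)$ thereafter, so combining the transient $\mathcal{KL}$-bound with the uniform bound $\gamma(\norm{u}_\infty)$ yields the ISS estimate of Definition~\ref{def:ISS}.
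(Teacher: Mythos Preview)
Your argument is correct and follows the same decomposition $\mathcal S=\mathcal M$, $\mathcal U=\emptyset$ as the paper, and your verification that the left-hand side of \eqref{ineq:dwellTimeConditionSFUJ} is nonpositive via \eqref{innen:estimatePhiLowerAndUpper} is in fact cleaner than the paper's one-line justification (which tacitly treats $\Phi_{\sigma(t_i)}$ and $\Phi_{\sigma(t_i^-)}$ as the same function).

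The only substantive difference is how the MDADT hypothesis is dispatched. The paper does not re-open the proof of Theorem~\ref{thm:ISS}; it simply declares $\tau_p=0$ for every $p$, whereupon \eqref{ineq:MDADT} reads $-\sum_{p}T^\sigma_p(s_1,s_2)\le T_{\mathcal S}$ and is satisfied with $T_{\mathcal S}=0$, and \eqref{ineq:dwellTimeConditionSFUJ} becomes $(\text{nonpositive})\le 0$. Your concern that this choice clashes with the standing requirement $\tau_p\ge\tau>0$ in Definition~\ref{def:MDADT} is legitimate, and your workaround---telescoping the flow and jump estimates directly to obtain \eqref{ineq:estimateF} with $C=0$ and unit decay, so that the Lipschitz patch on $[t_0,t_0+\tau]$ is unnecessary---is the honest fix. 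The paper's route is shorter but relies on a degenerate parameter choice; yours is self-contained at the cost of reproducing a few lines of Step~1.
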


\begin{proof}
    By the definition of non-decreasing ISS-Lyapunov functions in Definition \ref{def:nonDecrLF}, it holds that $\mathcal S = \mathcal M$. Inequality~\eqref{ineq:dwellTimeConditionSFUJ} is trivially fulfilled because $\psi_{\sigma(t_i^-)}(a) \leq a$ and $\Phi_{\sigma(t_i)}$ is increasing for each $t_i$, and therefore, the left-hand side of \eqref{ineq:dwellTimeConditionSFUJ} is less than or equal to 0. So, we can choose $\tau_p = 0$ for each $p \in \mathcal M$ and as a consequence, \eqref{ineq:MDADT} is always fulfilled. This completes the proof.
\end{proof}

Next, we will provide a converse Lyapunov theorem as follows:
\begin{theorem}\label{thm:converseLyapunovTheorem}
    Let system $\Sigma$ with a given switching signal $\sigma$ be ISS. Then, there exists a decreasing ISS-Lyapunov function for system $\Sigma$.
\end{theorem}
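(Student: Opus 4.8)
The plan is to adapt the Sontag--Wang converse ISS-Lyapunov construction to the present time-varying, impulsive and switched setting, with the switching signal $\sigma$ held fixed so that $\Sigma$ is viewed as a single time-varying impulsive system. The guiding idea is that ISS (Definition~\ref{def:ISS}) is equivalent to a \emph{robust} uniform asymptotic stability property: there is a margin $\rho\in\mathcal K_\infty$ such that every trajectory driven by an input obeying the state-dependent bound $\|u(t)\|_U\le\rho(\|x(t)\|_X)$ satisfies a single $\mathcal{KL}$-estimate $\|x(t;t_0,x_0,u,\sigma)\|_X\le\hat\beta(\|x_0\|_X,t-t_0)$, uniformly over all such inputs. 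I would first extract this reformulation from the ISS estimate: restricting to inputs that are small relative to the state makes the $\gamma(\|u\|_\infty)$ term subordinate, leaving a pure $\mathcal{KL}$-decay of the state. I would then invoke Sontag's $\mathcal{KL}$-lemma to replace $\hat\beta$ by an exponentially decaying bound after a $\mathcal K_\infty$ reparametrization, which is what will make the weighted supremum below finite.

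Next I would define the candidate
\begin{align*}
    W_\sigma(t,x):=\sup_{\tau\ge0}\ \sup_{u}\ q\!\big(\|x(t+\tau;t,x,u,\sigma)\|_X\big)\,\mu(\tau),
\end{align*}
where $q\in\mathcal K_\infty$, $\mu$ is a fixed continuous, increasing, unbounded weight with $\mu(0)=1$, and the inner supremum ranges over inputs obeying the state constraint $\|u(s)\|_U\le\rho(\|x(s;t,x,u,\sigma)\|_X)$ (equivalently, over the trajectories of the associated inclusion). Taking $\tau=0$, $u\equiv0$ gives the lower bound $\alpha_1(\|x\|_X):=q(\|x\|_X)\le W_\sigma(t,x)$, while the robust estimate together with the choice of $\mu$ yields finiteness and an upper bound $W_\sigma(t,x)\le\alpha_2(\|x\|_X)$, so~\eqref{ineq:LyapunovDefiniteness} holds. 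The decisive feature of the $\sup$-over-future design is that $W_\sigma$ is \emph{non-increasing along forward evolution, including across jumps}: the solution from the left limit $x(t_i^-)$ continues through the jump to $x_i^+=g_{\sigma(t_i^-)}(\cdot)$ and thereafter coincides with the continuation, so the weighted tail already dominates the post-jump $\tau=0$ term and $W_\sigma(t_i,x_i^+)\le W_\sigma(t_i^-,x(t_i^-))$. Thus $\psi_p=\id$ (hence $\psi_p\le\id$) is admissible for every $p$, exactly as required by Definition~\ref{def:decrLF}.

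For the flow, whenever the actual input satisfies $\|u(t)\|_U\le\rho(\|x(t)\|_X)$ it is admissible in the inner supremum, and the increasing weight $\mu$ forces $\tfrac{\diff}{\diff t}W_\sigma(t,x(t))\le-c\big(W_\sigma(t,x(t))\big)$ for some $c\in\mathcal P$; through the bounds $\alpha_1,\alpha_2$ the state constraint $\|u\|_U\le\rho(\|x\|_X)$ becomes a threshold of the form $W_\sigma\ge\chi(\|u\|_\infty)$ for a suitable $\chi\in\mathcal K_\infty$, giving~\eqref{ineq:LyapunovFlow} with $\varphi_p=-c\in-\mathcal P$ uniformly in $p$. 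The small-state jump condition~\eqref{ineq:LyapFuncCondition3} follows from the upper bound and the ISS estimate applied to the jumped state: if $W_\sigma(t_i^-,x)<\chi(\|u\|_\infty)$, then state and input are both controlled by $\|u\|_\infty$, so $W_\sigma(t_i,g_{\sigma(t_i^-)}(\cdot))\le\alpha_3(\|u\|_\infty)$ for some $\alpha_3\in\mathcal K$.

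The main obstacle, I expect, is not the algebra but the interaction of jumps and switching with the converse construction. First, proving the robust-stability reformulation of ISS \emph{in the presence of jumps and of (possibly infinitely many) switching modes} is the genuine analogue of the Sontag--Wang argument and must be carried out so that the single estimate $\hat\beta$ holds uniformly both over the state-constrained inputs and over the time shifts $t\mapsto t+\tau$; here the standing robust forward completeness / bounded-reachability hypothesis~\eqref{eq:forwardCompleteness} is essential to keep the defining supremum finite on every finite horizon. Second, I must secure the regularity demanded by Definition~\ref{def:candLyapunovFunction}, i.e.\ that each restriction $\widetilde V_p$ is continuous on $I^\sigma_p\times X$---which follows from continuous dependence of solutions on initial data, guaranteed by the local-Lipschitz hypotheses on $f_p$ and $g_p$ between and across jumps---and, most delicately, that all the comparison-function data $q,\mu,c,\chi,\alpha_3$ can be chosen \emph{uniformly over the infinitely many modes} in $\mathcal M$, which is precisely where the infinite-mode generality of this paper goes beyond the classical finite-mode converse theorems.
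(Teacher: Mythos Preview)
Your approach is correct in spirit but takes a genuinely different route from the paper. The paper's proof is a two-line reduction: since $\sigma$ is fixed, one absorbs the switching into time-dependence by setting $\widetilde f(t,x,u):=f_{\sigma(t)}(t,x,u)$ and $\widetilde g_i(x,u):=g_{\sigma(t_i^-)}(t_i^-,x,u)$, thereby turning $\Sigma$ into a single (non-switched) time-varying impulsive system $\widetilde\Sigma$, and then simply invokes an existing converse ISS-Lyapunov theorem for impulsive systems \cite[Theorem~2]{Bachmann2023characterization}. You, by contrast, propose to carry out the Sontag--Wang construction from scratch in the impulsive setting: robust-stability margin, $\mathcal{KL}$-lemma, weighted future-supremum candidate, monotonicity across jumps, and regularity. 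What the paper's route buys is brevity and a clean separation of concerns; what yours buys is self-containment and transparency about exactly where the impulsive structure enters (in particular, your observation that the sup-over-future construction automatically yields $\psi_p=\id$ across jumps is the key structural insight that the cited result also exploits).

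One small overreach in your write-up: you flag uniformity ``over the infinitely many modes in $\mathcal M$'' as the delicate point that distinguishes this paper from classical converse theorems. That is not where the difficulty lies here. Once $\sigma$ is frozen, the mode index is entirely encoded in the time variable, and your construction produces a \emph{single} $c\in\mathcal P$ and $\psi=\id$ independent of $p$, so the uniform bounds $\underline\varphi\le|\varphi_p|\le\overline\varphi$ in Definition~\ref{def:candLyapunovFunction} are trivial. The genuine technical work---which you correctly identify elsewhere---is the robust-margin reformulation and the continuity of the sup-candidate in the presence of jumps; these are exactly what the cited impulsive converse theorem handles, and are orthogonal to the cardinality of $\mathcal M$.
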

\begin{proof}
    Considering a fixed switching signal $\sigma$, we define
    \begin{align*}
        \widetilde f(t,x,u) &:= f_{\sigma(t)}(t,x,u),\ t \in I \setminus S, \\
        \widetilde g_i(x,u) &:= g_{\sigma(t_i^-)}(t_i^-,x,u),\ t_i \in S, \ i \in \N
    \end{align*}
    for all $(x, u) \in X \times \U$. Note that for every constants $C,D >0$, functions $\widetilde f$ and $\widetilde g$ are locally Lipschitz continuous with respect to $x$ for $u \leq D, t \in I \setminus S$, and locally Lipschitz continuous with respect to $u$  for $x \leq C, i \in \N$. Then, we can treat the impulsive switched system $\Sigma$ as impulsive system
    \begin{align*}
        \widetilde \Sigma\colon \left\{
        \begin{aligned}
            \dot x(t) &= \widetilde f(t,x(t),u(t)), &&t \in I \setminus S, \\
            x(t_i^+) &= \widetilde g_i(x(t_i^-),u(t_i^-)), &&t_i \in S, \ i \in \N.
        \end{aligned}
        \right.
    \end{align*}
    Then, we can employ \cite[Theorem 2]{Bachmann2023characterization} to conclude that ISS of $\Sigma$ implies the existence of a decreasing ISS-Lyapunov function as given in Definition \ref{def:decrLF}.
\end{proof}

\begin{remark}
    By definition, every decreasing ISS-Lyapunov function is also a non-decreasing ISS-Lyapunov function. Since ISS implies the existence of a decreasing ISS-Lyapunov, it follows that ISS implies the existence of a non-decreasing ISS-Lyapunov.
\end{remark}

\section{Construction of ISS-Lyapunov functions}
\label{sec:construction}
Now, let us assume that the non-decreasing ISS-Lyapunov function as given in Definition \ref{def:nonDecrLF} is in fact decreasing along trajectories. We will show that, in contrast to Theorem~\ref{thm:ISS}, it is not necessary to impose the dwell/leave time conditions \eqref{ineq:MDADT} and \eqref{ineq:MDALT} in order to conclude ISS from the existence of an ISS-Lyapunov function. That is why we will show, in the following, how to construct a (decreasing) ISS-Lyapunov function from a non-decreasing one, provided the switching signal satisfies the corresponding dwell/leave time conditions of Theorem~\ref{thm:ISS}. This construction is motivated by the facts that (i) the existence of such a decreasing ISS-Lyapunov function is a necessary and sufficient condition for ISS, (ii) the level sets of decreasing ISS-Lyapunov functions directly indicate the reachable sets, and (iii) it facilitates the computation of $\beta$ which, in turn, determines the convergence rate.



\begin{theorem}\label{thm:ConstructionOfLF}
    Let $V_\sigma$ be a non-decreasing ISS-Lyapunov functions for system $\Sigma$ with $\varphi_p, \psi_p$ and $\tau_p$ for $p \in \mathcal M$ defined as in Definition \ref{def:nonDecrLF}. Let $\image(\overline \Phi) = \R$.

    Then, an ISS-Lyapunov function for system $\Sigma$ is given by
    \begin{align*}
        W_\sigma(t,x) &= \Phi_{\sigma(t^-)}^{-1}\!\left( \Phi_{\sigma(t)} \!\para{V_\sigma(t,x)} + h_\sigma(\sigma(t),t)\right),
    \end{align*}
    where
    \begin{align*}
        h_\sigma (p, t) 
        = \min_{j \in \{0, \dots, i\}}\!\left\{0,\para{\sum_{p \in \mathcal S} T^\sigma_p(t_j,t) - \tau_p N^{\sigma}_p(t_j^-,t)}(1 - \delta)- \para{\sum_{p \in \mathcal U} T^\sigma_p(t_j,t) - \tau_p N^{\sigma}_p(t_j,t)}(1 + \delta)\right\}
    \end{align*}
    for $t \in [t_i,t_{i + 1})$, $i \in \N_0$.    
\end{theorem}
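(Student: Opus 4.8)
The plan is to verify that $W_\sigma$ meets every requirement of a decreasing ISS-Lyapunov function in the sense of Definitions~\ref{def:candLyapunovFunction} and~\ref{def:decrLF}, and to reduce the whole argument to a one-dimensional statement about the scalar function
\begin{align*}
    w(t) := \Phi_{\sigma(t)}\!\para{V_\sigma(t,x(t))} + h_\sigma(\sigma(t),t)
\end{align*}
evaluated along an arbitrary trajectory $x(\cdot)=x(\cdot;t_0,x_0,u,\sigma)$ on the set where \eqref{ineq:LyapunovFlow}--\eqref{ineq:LyapunovJump} are active, i.e. where $V_\sigma\ge\chi(\norm{u}_\infty)$. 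I abbreviate $v(t):=V_\sigma(t,x(t))$, $v_i:=v(t_i)$, $v_i^-:=v(t_i^-)$, and write $G_j(t)$ for the $j$-th bracketed term inside the minimum defining $h_\sigma$, so that $h_\sigma(\sigma(t),t)=\min_{j\in\{0,\dots,i\}}\braces{0,G_j(t)}$ for $t\in[t_i,t_{i+1})$. The crucial observation is that $W_\sigma(t,x(t))=\Phi_{\sigma(t^-)}^{-1}(w(t))$ and that the \emph{outer} function $\Phi_{\sigma(t^-)}^{-1}$ is the same on both sides of any switching instant $t_i$ (namely $\Phi_{\sigma(t_i^-)}^{-1}$). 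Since every $\Phi_p^{-1}$ is strictly increasing, monotonicity of $w$ transfers verbatim to $W_\sigma$; hence it suffices to show that $w$ is strictly decreasing during flow and non-increasing across jumps.

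\textbf{Flow condition.}
On a mode interval with $\sigma\equiv p$, combine \eqref{ineq:transformIDEforV_flow}, which yields $\tfrac{\diff}{\diff t}\Phi_p(v(t))\le -1$ for $p\in\mathcal S$ and $\le +1$ for $p\in\mathcal U$, with the Dini-derivative of the finite minimum defining $h_\sigma$. On a stable interval each inner branch $G_j$ has slope $(1-\delta)$ and the constant branch $0$ has slope $0$, so the active-branch slope is at most $(1-\delta)$ and $\tfrac{\diff}{\diff t}w\le -1+(1-\delta)=-\delta$. On an unstable interval each $G_j$ has slope $-(1+\delta)$; the bound could only fail if the constant branch $0$ were the sole active branch, but evaluating the newest branch gives $G_i(t)=-(1+\delta)T^\sigma_p(t_i,t)<0$ throughout the interval, so the $0$-branch is never active and again $\tfrac{\diff}{\diff t}w\le +1-(1+\delta)=-\delta$. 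Thus $\tfrac{\diff}{\diff t}w\le-\delta$ along flow, and since $\Phi_p'(s)=1/\abs{\varphi_p(s)}$ implies $\tfrac{\diff}{\diff t}W_\sigma=\abs{\varphi_p(W_\sigma)}\,\tfrac{\diff}{\diff t}w$, I obtain $\tfrac{\diff}{\diff t}W_\sigma\le-\delta\abs{\varphi_p(W_\sigma)}=:\varphi_p^W(W_\sigma)$ with $\varphi_p^W\in-\PP$, exactly the flow condition of a decreasing ISS-Lyapunov function.

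\textbf{Jump condition and the main obstacle.}
For a switch at $t_i$ I will show $w(t_i)\le w(t_i^-)$, which through the common outer map $\Phi_{\sigma(t_i^-)}^{-1}$ gives $W_\sigma(t_i,x_i^+)\le W_\sigma(t_i^-,x(t_i^-))$, i.e. the jump condition with $\psi_p^W=\id\le\id$. The $\Phi$-part of $w$ jumps by $\Phi_{\sigma(t_i)}(v_i)-\Phi_{\sigma(t_i^-)}(v_i^-)$, which by \eqref{ineq:LyapunovJump} and the dwell/leave conditions \eqref{ineq:dwellTimeConditionSFUJ}--\eqref{ineq:dwellTimeConditionUFSJ} (cf.\ \eqref{ineq:estimateswitchingVariableUpperBoundSFUJ}--\eqref{ineq:estimateswitchingVariableUpperBoundUFSJ}) is bounded by $\tau_{\sigma(t_i^-)}(1-\delta)$ when $\sigma(t_i^-)\in\mathcal S$ and by $-\tau_{\sigma(t_i^-)}(1+\delta)$ when $\sigma(t_i^-)\in\mathcal U$. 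The function $h_\sigma$ is built so that, across $t_i$, every older branch $G_j$ is translated by exactly the negative of this per-switch budget and a single new branch $G_i$ is appended; a short case distinction on which branch attains the minimum then shows that this translation cancels the $\Phi$-jump, so that $w(t_i)\le w(t_i^-)$. I expect this jump bookkeeping to be the main obstacle: one must match the charge $\tau_{\sigma(t_i^-)}(1\mp\delta)$ occurring at each switch to the actual increment of $h_\sigma$ there, which is precisely the reason for the asymmetric counting intervals $N_p^\sigma(t_j^-,t)$ for $\mathcal S$ versus $N_p^\sigma(t_j,t)$ for $\mathcal U$, and one must check that taking the minimum over $j\in\{0,\dots,i\}$ together with the constant branch $0$ preserves the inequality.

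\textbf{Bounds, condition 3), and conclusion.}
It remains to establish \eqref{ineq:LyapunovDefiniteness}, condition 3) of Definition~\ref{def:candLyapunovFunction}, and the input threshold in condition 2'). By construction $h_\sigma\le 0$, and the MDADT/MDALT inequalities \eqref{ineq:MDADT}--\eqref{ineq:MDALT} give the lower bound $h_\sigma\ge -C$ with $C:=(1-\delta)T_{\mathcal S}+(1+\delta)T_{\mathcal U}$, so $h_\sigma$ takes values in the bounded interval $[-C,0]$. Combining this with the comparison \eqref{innen:estimatePhiLowerAndUpper} among $\Phi_p,\underline\Phi,\overline\Phi$ and the hypothesis $\image(\overline\Phi)=\R$ (which guarantees that $\overline\Phi^{-1}$ is defined after subtracting $C$, and also covers the switching instants where the inner mode $\sigma(t)$ and the outer mode $\sigma(t^-)$ differ), I obtain mode-uniform $\mathcal K_\infty$ bounds $\alpha_1^W(\norm{x}_X)\le W_\sigma(t,x)\le\alpha_2^W(\norm{x}_X)$. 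Finally, condition 3) and the threshold for $W_\sigma$ are inherited from those of $V_\sigma$ by composing with the strictly increasing maps $\Phi_{\sigma(t)}$ and $\Phi_{\sigma(t^-)}^{-1}$ and the bounded shift $h_\sigma$, possibly enlarging $\alpha_3$ and $\chi$. Together these items certify that $W_\sigma$ is a decreasing ISS-Lyapunov function.
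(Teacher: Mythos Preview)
Your proposal is correct and follows essentially the same route as the paper's proof: verify Conditions~1)--3) of Definition~\ref{def:candLyapunovFunction} for $W_\sigma$ by using boundedness of $h_\sigma$ (from MDADT/MDALT) together with the comparison \eqref{innen:estimatePhiLowerAndUpper} for the sandwich bounds, the chain rule combined with the slope of $h_\sigma$ for the flow decrease, the dwell/leave inequalities \eqref{ineq:dwellTimeConditionSFUJ}--\eqref{ineq:dwellTimeConditionUFSJ} balanced against the jump of $h_\sigma$ for the non-increase at switches, and inheritance from $V_\sigma$ for Condition~3). Your packaging via the scalar $w(t)=\Phi_{\sigma(t)}(v(t))+h_\sigma(\sigma(t),t)$ and the explicit Dini-derivative treatment of the finite minimum defining $h_\sigma$ (in particular the observation that on an unstable interval the newest branch $G_i$ forces $h_\sigma<0$, so the constant branch $0$ is never solely active) are clean expository devices, but the underlying argument is the same as the paper's.
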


Before proving Theorem~\ref{thm:ConstructionOfLF}, we discuss the intuition behind the construction of the decreasing Lyapunov function from a non-decreasing one. To this end, consider a system with uniform fixed switching instants $(t_i)_{i \in \N}$ as sketched in Figure~\ref{fig:decrLF}. Note that for a stable mode $p \in \mathcal S$, the Lyapunov value of the non-decreasing ISS-Lyapunov function $V_\sigma(t,x(t))$ is falling faster than the successive (unstable) jump. Therefore, when the trajectory is a priori known, one can connect the maximum point at the beginning of the stable interval with the point directly after the jump to obtain a strictly decreasing ISS-Lyapunov function $W_\sigma(t,x(t))$. As the trajectory is, in general, not a priori known, the next jump height has to be estimated, which is implicitly done by the MDADT Condition~\eqref{ineq:MDADT} and Condition~\eqref{ineq:dwellTimeConditionSFUJ}, which $V_\sigma$ has to satisfy. 
\\
For an unstable mode $p \in \mathcal U$, the scenario is analogous. The flows might not be falling and even increasing but in general, the jumps are stabilizing the overall behavior. Therefore, when the trajectory is a priori known, one can connect the point at the beginning of the unstable interval with the point directly after the jump to obtain a strictly decreasing ISS-Lyapunov function $W_\sigma(t,x(t))$. As the trajectory is not a priori known, this effect is implicitly determined by the MDALT Condition~\eqref{ineq:MDALT} and Condition \eqref{ineq:dwellTimeConditionUFSJ}, which $V_\sigma$ has to satisfy.
\\
Now, to consider a more general system which satisfies an average dwell-time condition: here, additionally to estimating the jump heights, we also have to estimate the switching times. For this, a correction term $h$ as depicted in Figure \ref{fig:correctionFunction} has to be introduced, which measures if the next jump is already overdue or already too many jumps have occurred as compared to the $\tau_p$ in the MDADT and MDALT conditions \eqref{ineq:MDADT} and \eqref{ineq:MDALT}, respectively. These terms appear piecewise linearly in $W_\sigma$ because $\Phi_\sigma$ maps the Lyapunov values to sub-linear functions.

\begin{figure}[htb]
    \begin{tikzpicture}
        \begin{axis}[
            axis x line=center,
            axis y line=center,
            xlabel style={below right},
            ylabel style={above},
            minor xtick={1,2,3},
            grid=minor,
            xmin=0,
            xmax=3.1,
            ymin=-0.35,
            ymax=1.35,
            xlabel=time $t$,
            ylabel={$V_\sigma$, $W_\sigma$},
            xtick={1,2,3},
            xticklabels={$t_1$,$t_2$,$t_3$},
            xticklabel style={font=\tiny},
            extra x ticks ={0.5,1.5,2.5},
            extra x tick labels={${\sigma(t) \in \mathcal S}$, $\sigma(t) \in \mathcal U$, $\sigma(t) \in \mathcal S$},
            extra x tick style={
                every tick/.append style={draw=none},
                xticklabel style = {font=\normalsize,yshift=-.3cm}
            },
            ytick=\empty,
            width=.95\columnwidth,
            height=.60\columnwidth,
            set layers=tick labels on top
            ]
            \draw[draw=none,fill=white] (0.1,-0.07) rectangle (3.2,-0.18);
            
            \addplot[domain=0:1,color=customblue,thick,smooth] {exp(-1.2*x)};
            \addplot[domain=1:2,color=customblue,thick,smooth,forget plot] {exp(+0.2*x-0.7)};
            \addplot[domain=2:3,color=customblue,thick,smooth,forget plot] {exp(-2*x+3)};
            \addplot[domain=3:3.25,color=customblue,thick,smooth,forget plot] {exp(+0.2*x-2.1)};
            \addplot[mark=*,only marks,color=customblue,forget plot] coordinates {(1,{exp(-0.5)})(2,{exp(-1)})(3,{exp(-1.5)})};
            \addplot[mark=*,fill=white,only marks,draw=customblue,thick,forget plot] coordinates {(1,{exp(-1.2)})(2,{exp(-0.3)})(3,{exp(-3)})};
            \draw[dotted,color=customblue,thick,smooth] (1,{exp(-0.5)})--(1,{exp(-1.2)});
            \draw[dotted,color=customblue,thick,smooth] (2,{exp(-1)})--(2,{exp(-0.3)});
            \draw[dotted,color=customblue,thick,smooth] (3,{exp(-1.5)})--(3,{exp(-3)});
            
            \addplot[color=red,thick,smooth] {exp(-0.5*x)};
            
            \legend{${V_\sigma(t,x(t))}$, ${W_\sigma(t,x(t))}$};
        \end{axis}
    \end{tikzpicture}
	\caption{Construction of a decreasing ISS-Lyapunov function $W_\sigma(t,x(t))$ from a non-decreasing ISS-Lyapunov function $V_\sigma(t,x(t))$ and their relation.\label{fig:decrLF}}
\end{figure}
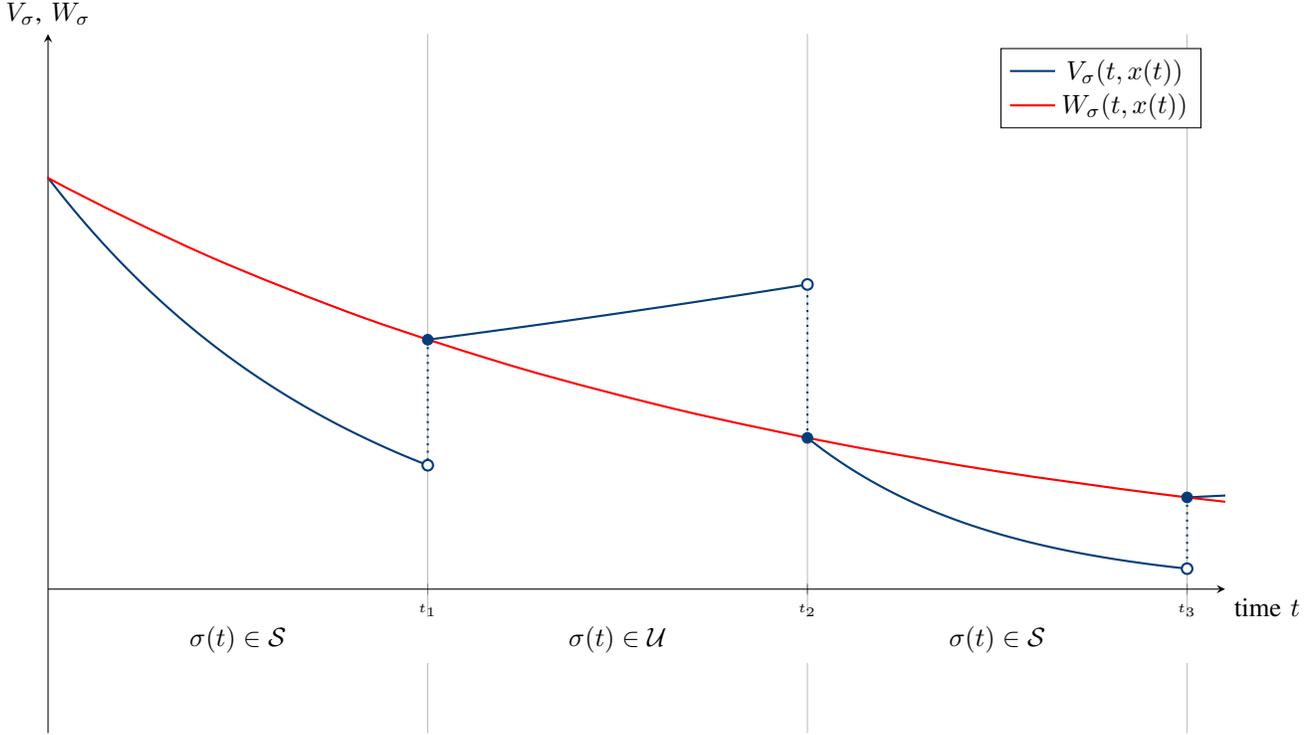

\begin{figure}[htb]
    \begin{tikzpicture}
        \begin{axis}[
            axis x line=center,
            axis y line=center,
            xlabel style={below right},
            ylabel style={above},
            xmin=0,
            xmax=2.38,
            ymin=-0.25,
            ymax=1.35,
            minor xtick={1,2,3},
            minor ytick={1},
            grid=minor,
            xlabel=time $t$,
            ylabel={$h_\sigma(\sigma(t),t)$},
            xtick={0.2,0.3,0.9,1,1.2,1.3,1.9,2,2.3,2.4,2.5,3},
            xticklabels={$t_1$,$t_2$,$t_3$,$t_4$,$t_5$,$t_6$,$t_7$,$t_8$,$t_9$,$t_{10}$,$t_{11}$,$t_{12}$},
            xticklabel style={font=\tiny},
            extra x ticks ={0.5,1.5,2.5},
            extra x tick labels={${\sigma(t) \in \mathcal S}$, $\sigma(t) \in \mathcal U$, $\sigma(t) \in \mathcal S$},
            extra x tick style={
                every tick/.append style={draw=none},
                xticklabel style = {font=\normalsize,yshift=-.3cm}
            },
            ytick={0,1},
			yticklabels=\empty,
            ymajorgrids=true,
            extra y ticks={0,1},
            extra y tick labels={{\tiny${-T_{\mathcal S}(1 - \delta)\quad}$ ${-T_{\mathcal U}(1 + \delta)}$},0},
            yticklabel style={text width=1.4cm,align=right},
            width=.85\columnwidth,
            height=.75\columnwidth,
            set layers=tick labels on top
            ]
            \draw[draw=none,fill=white] (0.1,-0.05) rectangle (3.2,-0.14);

            \draw[color=customblue,thick,smooth] (0,1)--(0.2,1)(0.2,0.8)--(0.3,0.9)(0.3,0.7)--(0.6,1)--(0.9,1)(0.9,0.8)--(1,0.9);
            \addplot[mark=*,only marks,color=customblue,forget plot] coordinates {(0.2,0.8)(0.3,0.7)(0.9,0.8)};
            \addplot[mark=*,fill=white,only marks,draw=customblue,thick,forget plot] coordinates {(0.2,1)(0.3,0.9)(0.9,1)(1,0.9)};
            \draw[dotted,color=customblue,thick,smooth] (0.2,1)--(0.2,0.8)(0.3,0.9)--(0.3,0.7)(0.9,1)--(0.9,0.8);
            \draw[color=customblue,thick,smooth] (1,0.7)--(1.2,0.5)(1.2,0.9)--(1.3,0.8)(1.3,1)--(1.9,0.4)(1.9,0.8)--(2,0.7);
            \addplot[mark=*,only marks,color=customblue,forget plot] coordinates {(1,0.7)(1.2,0.9)(1.3,1)(1.9,0.8)};
            \addplot[mark=*,fill=white,only marks,draw=customblue,thick,forget plot] coordinates {(1.2,0.5)(1.3,0.8)(1.9,0.4)(2,0.7)};
            \draw[dotted,color=customblue,thick,smooth] (1,0.9)--(1,0.7)(1.2,0.5)--(1.2,0.9)(1.3,0.8)--(1.3,1)(1.9,0.4)--(1.9,0.8);
            \draw[color=customblue,thick,smooth] (2,1)--(2.3,1)(2.3,0.8)--(2.4,0.9)(2.4,0.7)--(2.5,0.8)(2.5,0.6)--(2.9,1)--(3,1);
            \addplot[mark=*,only marks,color=customblue,forget plot] coordinates {(2,1)(2.3,0.8)(2.4,0.7)(2.5,0.6)};
            \addplot[mark=*,fill=white,only marks,draw=customblue,thick,forget plot] coordinates {(2.3,1)(2.4,0.9)(2.5,0.8)};
            \draw[dotted,color=customblue,thick,smooth] (2,0.7)--(2,1)(2.3,1)--(2.3,0.8)(2.4,0.9)--(2.4,0.7)(2.5,0.8)--(2.5,0.6);
            \draw[color=customblue,thick,smooth] (3,0.8)--(3.3,0.5);
            \addplot[mark=*,only marks,color=customblue,forget plot] coordinates {(3,0.8)};
            \addplot[mark=*,fill=white,only marks,draw=customblue,thick,forget plot] coordinates {(3,1)};
            \draw[dotted,color=customblue,thick,smooth] (3,1)--(3,0.8);

            \draw [decorate, decoration = {brace,raise=3pt}] (0.9,0.8)--(0.9,1);
            \node at (0.68,0.85) {\footnotesize$\tau_{\sigma(t_3^-)}$};
            
            \draw [decorate, decoration = {brace,raise=3pt}] (1.9,0.8)--(1.9,0.4);
            \node at (2.17,0.57) {\footnotesize$\tau_{\sigma(t_7^-)}$};
     
        \end{axis}
    \end{tikzpicture}        
	\caption{Illustration of function $h_\sigma(p,t)$ for a fixed switching signal.\label{fig:correctionFunction}}
\end{figure}

\begin{proofof}[Proof of Theorem~\ref{thm:ConstructionOfLF}]
    Let us verify the conditions in Definition \ref{def:candLyapunovFunction} for $W_\sigma$ to be an ISS-Lyapunov function. Before proceeding, note that, by \eqref{ineq:MDADT} and \eqref{ineq:MDALT}, it follows for all $p \in \mathcal M$, $t \in I$ and all switching signals $\sigma \in \Omega$ that
\begin{align*}
    h_\sigma(p,t) \in [-  T_{\mathcal S}(1 - \delta) - T_{\mathcal U}(1 + \delta),0].
\end{align*}

\emph{Condition 1}: It follows for $(t,x, u) \in (I\setminus S) \times X \times U$ that
\begin{align*}
    &W_\sigma(t,x) \\
    &\geq \Phi_{\sigma(t^-)}^{-1}\!\left( \Phi_{\sigma(t)} \!\para{V_\sigma(t,x)} -  T_{\mathcal S}(1 - \delta) - T_{\mathcal U}(1 + \delta)\right) \\
    &\geq \Phi_{\sigma(t^-)}^{-1}\!\left( \Phi_{\sigma(t)} \!\para{\alpha_1\!\para{\norm{x}_X}} -  T_{\mathcal S}(1 - \delta) - T_{\mathcal U}(1 + \delta)\right)\\
    &\geq
    \min\!\braces{\underline \Phi^{-1},\overline \Phi^{-1}}\!\left( \min\!\braces{\overline \Phi, \underline \Phi}\!\para{\alpha_1\!\para{\norm{x}_X}} -  T_{\mathcal S}(1 - \delta) - T_{\mathcal U}(1 + \delta)\right)\\
    &=: \widetilde\alpha_1\!\para{\norm{x}_X}.
\end{align*}
From \eqref{innen:estimatePhiLowerAndUpper} together with $\image(\overline \Phi) = \R$ it follows that $\widetilde \alpha_1$ is a $\KK_\infty$-function. 
Furthermore, we have
\begin{align*}
    W_\sigma(t,x) 
    &\leq \Phi_{\sigma(t^-)}^{-1}\!\para{\Phi_{\sigma(t)}\!\para{V_\sigma(t,x)} + 0}  \\
    &=V_\sigma(t,x)
    \leq \alpha_2\!\para{\norm{x}_X}
    \\
    &=: \widetilde\alpha_2\!\para{\norm{x}_X}.
\end{align*}
Hence, Condition 1 is shown if $t$ is not a switching instant. 
 
 If $t$ is a switching instant,  then we have to consider two cases. If $\sigma(t^-) \in \mathcal S$, then we have
 \begin{align}
    W_\sigma(t_i,g_{\sigma(t_i^-)}(t_i^-,x,u)) 
    &= \Phi_{\sigma(t_i^-)}^{-1} \!\!\para{\Phi_{\sigma(t_i)} \!\para{V_\sigma\!\para{t_i, g_{\sigma(t_i^-)}\!\para{t_i^-,x, u}}} + h_\sigma(\sigma(t_i),t_i)} \nonumber\\
    &\leq \Phi_{\sigma(t_i^-)}^{-1}\!\left(\Phi_{\sigma(t_i)} \!\para{\psi_{\sigma(t_i^-)}\!\para{V_\sigma(t_i^-,x)}} + h_\sigma(\sigma(t_i^-),t_i^-) - \tau_{\sigma(t_i^-)}(1 - \delta)\right) \nonumber\\
    &\leq \Phi_{\sigma(t_i^-)}^{-1}\!\para{\Phi_{\sigma(t_i^-)}\!\para{V_\sigma(t_i^-,x)} + h_{\sigma}(\sigma(t_i^-),t_i^-) } \nonumber\\
    &= W_\sigma(t_i^-,x), \label{ineq:LyapunovJumpEstimateSFUJ}
\end{align}
where we used the definition of $h_\sigma$ in the second step and the inequality \eqref{ineq:dwellTimeConditionSFUJ} in the third step.

Now, if $\sigma(t^-) \in \mathcal U$, then we have
\begin{align}
    W_\sigma(t_i,g_{\sigma(t_i^-)}(t_i^-,x,u))
    &= \Phi_{\sigma(t_i^-)}^{-1} \!\!\para{\Phi_{\sigma(t_i)} \!\para{V_\sigma\!\para{t_i, g_{\sigma(t_i^-)}\!\para{t_i^-,x, u}}} + h_\sigma(\sigma(t_i),t_i)} \nonumber\\
    &\leq \Phi_{\sigma(t_i^-)}^{-1}\!\left(\Phi_{\sigma(t_i)} \!\para{\psi_{\sigma(t_i^-)}\!\para{V_\sigma (t_i^-,x)}} + h_\sigma(\sigma(t_i^-),t_i^-) + \tau_{\sigma(t_i^-)}(1 + \delta)\right) \nonumber\\
    &\leq \Phi_{\sigma(t_i^-)}^{-1}\!\para{\Phi_{\sigma(t_i^-)}\!\para{V_\sigma(t_i^-,x)} + h_\sigma(\sigma(t_i^-),t_i^-) } \nonumber\\
    &= W_\sigma(t_i^-,x), \label{ineq:LyapunovJumpEstimateUFSJ}
\end{align}
where we used the definition of $h_\sigma$ in the second step and the inequality \eqref{ineq:dwellTimeConditionUFSJ} in the third step.

\emph{Condition 2:} Let $\widetilde \chi = \chi$ and $W_\sigma(t,x) \geq \widetilde \chi\!\para{\norm{u}_\infty}$. Then it holds that $V_\sigma(t,x) \geq W_\sigma(t,x) \geq \widetilde \chi\!\para{\norm{u}_\infty}$.

\emph{Part a:} We bound the flow behavior for $t\in I\setminus S$.\\
Let us first consider the case $\sigma(t) \in \mathcal S$, but $h_\sigma(\sigma(t),t) \neq 0$. It follows
\begin{align*}
    \dot W_\sigma(t,x)
    &\leq \para{\Phi_{\sigma(t^-)}^{-1}}'\!\para{\Phi_{\sigma(t)}\!\para{V_\sigma(t,x)} + h_\sigma(\sigma(t),t)} \para{\tfrac{\dot V_\sigma(t,x)}{-\varphi_{\sigma(t)}\!\para{V_\sigma(t,x)}} + 1\cdot (1 - \delta)} \nonumber\\
    &\leq - \delta |\varphi_{\sigma(t)}|\!\para{\Phi_{\sigma(t)}^{-1}\!\para{\Phi_{\sigma(t)}\!\para{V_\sigma(t,x)} + h_\sigma(\sigma(t),t)}} \nonumber\\
    &= - \delta |\varphi_{\sigma(t)}|\!\para{W_\sigma(t,x)}, 
\end{align*}
where $\para{\Phi_{\sigma(t^-)}^{-1}}'$ denotes the derivative of $\Phi_{\sigma(t^-)}^{-1} = \Phi_{\sigma(t^-)}^{-1}(s)$ with respect to $s \in \image (\Phi_{\sigma(t^-)})$.
Here, we used the inverse function theorem, \eqref{ineq:transformIDEforV_flow}, and $\sigma(t^-) = \sigma(t)$ in the second step. Let us now consider the case  $\sigma(t) \in \mathcal S$ and $h_\sigma(\sigma(t),t) = 0$. Then it holds that
\begin{align*}
    \dot W_\sigma(t,x)
    &\leq \para{\Phi_{\sigma(t^-)}^{-1}}'\!\para{\Phi_{\sigma(t)}\!\para{V_\sigma(t,x)} + h_\sigma(\sigma(t),t)}  \para{\tfrac{\dot V_\sigma(t,x)}{-\varphi_{\sigma(t)}\!\para{V_\sigma(t,x)}}} \nonumber\\
    &\leq - \abs{\varphi_{\sigma(t)}}\!\para{\Phi_{\sigma(t)}^{-1}\!\para{\Phi_{\sigma(t)}\!\para{V_\sigma(t,x)} + h_\sigma(\sigma(t),t)}} \nonumber\\
    &= - \abs{\varphi_{\sigma(t)}}\!\para{W_\sigma(t,x)}.
\end{align*}

In the case $\sigma(t) \in \mathcal U$, we obtain
\begin{align*}
    \dot W_\sigma(t,x)
    &\leq \para{\Phi_{\sigma(t^-)}^{-1}}'\!\para{\Phi_{\sigma(t)}\!\para{V_\sigma(t,x)} + h_\sigma(\sigma(t),t)}  \para{\tfrac{\dot V_\sigma(t,x)}{\varphi_{\sigma(t)}\!\para{V_\sigma(t,x)}} - 1 \cdot (1 + \delta)} \nonumber\\
    &\leq - \delta \cdot \varphi_{\sigma(t)}\!\para{\Phi_{\sigma(t)}^{-1}\!\para{\Phi_{\sigma(t)}\!\para{V_\sigma(t,x)} + h_\sigma(\sigma(t),t)}} \nonumber\\
    &= - \delta \ \varphi_{\sigma(t)}\!\para{W_\sigma(t,x)}.
\end{align*}
Therefore, for $\widetilde \varphi_p := \min\{\delta,1\} \abs{\varphi_p}$ for $p \in M$, the inequality $\dot W_\sigma(t,x) \leq - \widetilde \varphi_{\sigma(t)} \!\para{W_\sigma(t,x)}$ holds for all $t \in I \setminus S$.

\emph{Part b:} We verify the jump behavior for $t=t_i\in S$.\\
If $\sigma(t_i^-) \in \mathcal S$, then \eqref{ineq:LyapunovJumpEstimateSFUJ} implies $$W_\sigma(t_i,g_{\sigma(t_i^-)}(t_i^-,x,u(t_i^-))) \leq W_\sigma(t_i^-,x).$$ On the other hand, if $\sigma(t_i^-) \in \mathcal U$, then \eqref{ineq:LyapunovJumpEstimateUFSJ} implies $W_\sigma(t_i,g_{\sigma(t_i^-)}(t_i^-,x,u(t_i^-))) \leq W_\sigma(t_i^-,x)$, which is also valid for the case $h_\sigma(\sigma(t_i),t_i)= 0$, i.e., $h_\sigma(\sigma(t_i),t_i)  \leq h_\sigma(\sigma(t_i^-),t_i^-) + \tau_{\sigma(t_i^-)}(1+ \delta)$. Hence,
\begin{align}\label{ineq:LyapunovJumpEstimate}
    W_\sigma(t_i,g_{\sigma(t_i^-)}(t_i^-,x,u(t_i^-))) \leq W_\sigma(t_i^-,x)
\end{align}
holds true for all $t_i \in S$.

\emph{Condition 3}: For $W_\sigma(t_i^-,x) < \widetilde \chi\!\para{\norm{u}_\infty}$, either $V_\sigma(t_i^-, x) < \chi\!\para{\norm{u}_\infty}$ or $V_\sigma(t_i^-, x) \geq \chi\!\para{\norm{u}_\infty}$ holds. In the former case, we have
\begin{align*}
    W_\sigma(t_i,g_{\sigma(t_i^-)}(t_i^-,x,u(t_i^-)))
    &\leq V_\sigma(t_i^-, g_{\sigma(t_i^-)}(t_i^-,x,u(t_i^-)))\\
    &\leq \alpha_3\!\para{\norm{u}_\infty},
\end{align*}
while in the latter case the estimate \eqref{ineq:LyapunovJumpEstimate} implies
\begin{align*}
    W_\sigma(t_i,g_{\sigma(t_i^-)}(t_i^-,x,u(t_i^-)))
    &\leq W(t_i^-,x) \leq \widetilde\chi\!\para{\norm{u}_\infty}.
\end{align*}
Hence, we can choose $\widetilde \alpha_3 \in \mathcal K$ as $$\widetilde \alpha_3(a) := \max\{\alpha_3(a),\widetilde \chi(a) \}.$$
This concludes the proof.
\end{proofof}

\section{Robustness with respect to the switching signal}
\label{sec:robustness}
The ISS result given in Theorem \ref{thm:ISS} applies to system $\Sigma$ with a predefined switching signal via a non-decreasing ISS-Lyapunov function tailored to that particular switching signal.
However, in practice, the switching signal is often a priori unknown, uncertain, or subject to perturbations. This motivates us to provide an ISS result for a set of switching signals in which only the possible mode changes are prescribed but the exact switching instants and the order in which the modes appear are unknown.
To this end, we define $\mathcal{Q}\subseteq \mathcal{M}\times\mathcal{M}$ as the set of pairs $(p,q)$ corresponding to the allowed mode changes, i.e.,\ $p=\sigma(t_i)$ and $q=\sigma(t_i^-)$ for any $\sigma \in \Omega$ and a set of switching times $(t_i)_{i \in \N}$, and consider the following class of impulsive switched systems with time-invariant flow and jumps:
\begin{align*}
    \widehat\Sigma\colon \left\{
    \begin{aligned}
        \dot{x}(t) &= f_{\sigma(t)}\!\para{x(t),u(t)}, \quad &&t \in I\setminus S, \\
        x(t_i^+) &=g_{\sigma(t_i^-)}\!\para{x(t_i^-),u(t_i^-)}, \quad &&t_i\in S.
    \end{aligned}
    \right.
\end{align*}
Note that to achieve such a robust ISS result with respect to sets of switching signals corresponding to $Q$, we cannot use the non-decreasing ISS-Lyapunov functions $V_\sigma$ given in Definition~\ref{def:nonDecrLF}. This is because the non-decreasing ISS-Lyapunov function $V_\sigma$ is composed of $(\widetilde V_p)_{p \in M}$, which only exists on the intervals $I_p^\sigma$ on which the $p$-th mode is active. Therefore, now we allow $(\widetilde V_p)_{p \in M}$ in Definition~\ref{def:nonDecrLF} to be time-invariant and evolve beyond the active switching interval $I_p^\sigma$. 
From this, we will conclude ISS for $\Sigma$, which is robust with respect to the switching signal.

We can now formulate the following corollary to Theorem~\ref{thm:ISS}, which gives robustness with respect to the set of switching signals in $\Omega$.

\begin{corollary}\label{cor:robustISS}
    Let $\mathcal Q$ be a set of mode changes, and consider the impulsive switched system $\widehat\Sigma$. Let $\mathcal{M} = \mathcal{S} \,\dot{\cup}\, \mathcal{U}$. Let $\widetilde V_p \in \Cont(X, \R^+_0)$ for $p \in M$, such that all of the following conditions hold true:
    \begin{enumerate}
		\item There exist functions $\alpha_1, \alpha_2 \in \mathcal{K}_\infty$, such that
		\begin{align*}
			\alpha_1\!\para{\norm{x}_X} \leq \widetilde V_p(x) \leq \alpha_2\!\para{\norm{x}_X} 
		\end{align*}
		holds true for all $t \in I$, all $x \in X$ and  for each $p\in \mathcal M$. 
		\item There exist functions $\chi \in \mathcal{K}_\infty$ and $\underline \varphi, \overline \varphi \in \PP$, such that for each $(p,q) \in \mathcal Q$, there exist $\psi_p \in \PP$ and $\varphi_p$, such that $-\varphi_p \in \mathcal P$ for $p \in \mathcal S$, and $\varphi_p\in\mathcal{P}$ for all $p\in\mathcal{U}$ with $\underline \varphi(s) \leq\abs{\varphi_p(s)} \leq \overline \varphi(s)$ $\forall s \in \R^+_0$, such that for all inputs $u \in U$ and all $x \in X$, the inequality
		\begin{align*}
			&\para{\tfrac{\diff}{\diff x} \widetilde V_p(x)}^T\cdot f(x,u) \leq \varphi_p\!\para{\widetilde V_p(x)} 
        \end{align*}
        holds true, whenever $\widetilde V_p(x) \geq \chi\!\para{\norm{u}_\infty}$, and
        \begin{align*}
			&\widetilde V_p(x^+) \leq \psi_q\!\para{\widetilde V_p(x)} 
		\end{align*}
		holds true, whenever $\widetilde V_q(x) \geq \chi\!\para{\norm{u}_\infty}$, where $x^+=g_{q}(x, u)$.
		\item There exists a function $\alpha_3 \in \mathcal{K}$, such that for all $x \in X$, all $u \in U_c$, which satisfy $\widetilde V_q(x) < \chi\!\para{\norm{u}_\infty}$, the jump rule satisfies
		\begin{align*}
	    	\widetilde V_p(x^+) &\leq \alpha_3\!\para{\norm{u}_\infty}. 
		\end{align*}
	\end{enumerate}
    Let $\Omega$ be the set of switching signals for which all mode changes are corresponding to the set of mode changes $\mathcal Q$.
    
    Let every switching signal $\sigma \in \Omega$ satisfy
    the MDADT condition \eqref{ineq:MDADT} for all modes in $\mathcal{S}$ and the MDALT condition \eqref{ineq:MDALT} for all modes in $\mathcal{U}$ with corresponding dwell/leave times $\{\tau_p\}_{p \in \mathcal{S}}$ and $\{\tau_p\}_{p \in \mathcal{U}}$. Let there exist a $\delta > 0$ such that for all $a > 0$, 
    one of the following two inequalities is satisfied:
    \begin{align*} 
        \Phi_p(\psi_q(a)) - \Phi_q(a) \leq \tau_q(1 - \delta),
    \end{align*}
    if $q\in\mathcal{S}$, or
    \begin{align*} 
        -\Phi_p(\psi_q(a)) + \Phi_q(a) \geq \tau_q(1 + \delta),
    \end{align*}
    if $q \in\mathcal{U}$.

    Then, $\widehat\Sigma$ is ISS, and the function
    $V: I \times X \to \R_0^+:$ $V_\sigma(t,x) := \widetilde V_{\sigma(t)}(x)$ for all $(t,x) \in I \times X$ defines a candidate ISS-Lyapunov function.
\end{corollary}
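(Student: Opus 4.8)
The plan is to reduce the statement to Theorem~\ref{thm:ISS} by showing that, for each fixed switching signal $\sigma \in \Omega$, the function $V_\sigma(t,x) := \widetilde V_{\sigma(t)}(x)$ is a non-decreasing ISS-Lyapunov function in the sense of Definition~\ref{def:nonDecrLF}. The hypotheses of the corollary are stated over the transition set $\mathcal Q$ rather than along a single $\sigma$, and every mode change of every $\sigma \in \Omega$ lies in $\mathcal Q$ by construction of $\Omega$; consequently the \emph{same} family $(\widetilde V_p)_{p \in \mathcal M}$ certifies ISS for all admissible switching signals, which is exactly the robustness claim. Existence and forward-completeness of solutions for $\widehat\Sigma$ are inherited from the standing assumptions on $f_p,g_p$, so Theorem~\ref{thm:ISS} is applicable.

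First I would fix $\sigma \in \Omega$ and verify that $V_\sigma$ is a candidate ISS-Lyapunov function in the sense of Definition~\ref{def:candLyapunovFunction}. The sandwich bound \eqref{ineq:LyapunovDefiniteness} is immediate from condition~1 since $V_\sigma(t,x) = \widetilde V_{\sigma(t)}(x)$ and the bounds $\alpha_1,\alpha_2$ are uniform in $p$. For the flow inequality \eqref{ineq:LyapunovFlow} I would exploit that each $\widetilde V_p$ is now time-invariant and differentiable: along any solution with $\sigma(t)=p$ the chain rule gives $\tfrac{\diff}{\diff t}V_\sigma(t,x(t)) = \big(\tfrac{\diff}{\diff x}\widetilde V_p(x(t))\big)^{T} f_{\sigma(t)}(x(t),u(t))$, the Dini derivative coinciding with this classical derivative, and the flow hypothesis of condition~2 bounds it by $\varphi_{\sigma(t)}(V_\sigma(t,x(t)))$ on the region $V_\sigma \geq \chi(\norm{u}_\infty)$. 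The jump estimate \eqref{ineq:LyapunovJump} and condition~3 follow from the jump hypotheses, the delicate point being the index bookkeeping at a switching instant $t_i$: the active mode changes from $q = \sigma(t_i^-)$ to $p = \sigma(t_i)$ with $(p,q)\in\mathcal Q$, so $V_\sigma(t_i^-,x(t_i^-)) = \widetilde V_q(x(t_i^-))$ while $V_\sigma(t_i,x_i^+) = \widetilde V_p(x_i^+)$ with $x_i^+ = g_q(x(t_i^-),u(t_i^-))$, and the jump hypothesis—read with the active-mode function before and after the jump, which is where the ``evolve-beyond-the-active-interval'' convention is used—yields precisely \eqref{ineq:LyapunovJump} with gain $\psi_{\sigma(t_i^-)}$.

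Next I would check the structural requirements of Definition~\ref{def:nonDecrLF}. The decomposition $\mathcal M = \mathcal S \,\dot{\cup}\, \mathcal U$ with the sign conditions $-\varphi_p \in \mathcal P$ on $\mathcal S$ and $\varphi_p \in \mathcal P$ on $\mathcal U$ is assumed verbatim; each $\sigma\in\Omega$ satisfies the MDADT condition \eqref{ineq:MDADT} and the MDALT condition \eqref{ineq:MDALT} with the common dwell/leave times $\{\tau_p\}$ by hypothesis; and the per-transition inequalities $\Phi_p(\psi_q(a)) - \Phi_q(a) \leq \tau_q(1-\delta)$ for $q\in\mathcal S$ and $-\Phi_p(\psi_q(a)) + \Phi_q(a) \geq \tau_q(1+\delta)$ for $q\in\mathcal U$ coincide with \eqref{ineq:dwellTimeConditionSFUJ} and \eqref{ineq:dwellTimeConditionUFSJ} upon substituting $p = \sigma(t_i)$ and $q = \sigma(t_i^-)$. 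These are available at every realized switching instant precisely because each transition lies in $\mathcal Q$. Hence $V_\sigma$ is a non-decreasing ISS-Lyapunov function and Theorem~\ref{thm:ISS} yields ISS of $\widehat\Sigma$ for this $\sigma$, while simultaneously establishing the second assertion that $V_\sigma$ is a candidate ISS-Lyapunov function.

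Finally, for robustness I would observe that $\sigma\in\Omega$ was arbitrary and that the entire certificate $(\widetilde V_p,\varphi_p,\psi_p,\tau_p,\alpha_1,\alpha_2,\alpha_3,\chi,\delta)$ is independent of $\sigma$, so the conclusion holds for the whole family with the same Lyapunov data. Inspecting the proof of Theorem~\ref{thm:ISS}, the resulting gains $\beta$ and $\gamma$ are assembled only from $\underline\Phi,\overline\Phi,\delta,\alpha_1,\alpha_2,\alpha_3,\chi$ and the constant $C = (1-\delta)T_{\mathcal S} + (1+\delta)T_{\mathcal U}$, so if $T_{\mathcal S},T_{\mathcal U}$ can be taken uniformly over $\Omega$ one even obtains a single ISS estimate valid for all admissible switchings. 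I expect the main obstacle to be exactly the jump bookkeeping—correctly matching the pre- and post-jump mode functions $\widetilde V_q$ and $\widetilde V_p$ under the time-invariant convention and confirming that restricting to transitions in $\mathcal Q$ makes the jump estimate and the dwell/leave time inequalities available at every switching instant of every $\sigma\in\Omega$—rather than the comparatively routine verification of the definiteness and flow bounds.
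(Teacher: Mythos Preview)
Your proposal is correct and follows exactly the paper's approach: fix $\sigma\in\Omega$, verify that $V_\sigma(t,x)=\widetilde V_{\sigma(t)}(x)$ is a non-decreasing ISS-Lyapunov function in the sense of Definition~\ref{def:nonDecrLF}, and invoke Theorem~\ref{thm:ISS}. The paper's own proof is in fact just the two-sentence version of what you wrote, so your detailed verification of the candidate conditions and the index bookkeeping at switching instants merely spells out what the paper leaves implicit; your closing observation about uniformity of $\beta,\gamma$ when $T_{\mathcal S},T_{\mathcal U}$ are uniform over $\Omega$ is likewise exactly the content of the remark the paper places immediately after the corollary.
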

\begin{proof}
    For fixed $\sigma$, $V_\sigma$ is a non-decreasing ISS-Lyapunov function. The claim then follows from Theorem \ref{thm:ISS}.
\end{proof}

\begin{definition}\label{def:famcandLyapunovFunction}
    Let $\Omega$ be a set of switching signals, and $\widehat \Sigma$ be a switched system. 
    Let $V: \Omega \times I \times X \to \R_0^+$ be a function, such that for each switching signal $\sigma \in \Omega$, we have that $V_\sigma(t,x) := V(\sigma,t,x) = \widetilde V_{\sigma(t)}(x)$ for all $(\sigma,t,x) \in \Omega \times I \times X$.
    Then, we call $\{V_\sigma\}_{\sigma \in \Omega}$ a \emph{family of candidate ISS-Lyapunov functions}.
\end{definition}

\begin{remark}
    Note that, if constant for forward-completeness $K = K(C, D, \tau, \sigma)$ and $\delta$ are independent of $\sigma$, then $\beta$ and $\gamma$ can be chosen independently of $\sigma$, i.e., $\Sigma$ is ISS uniformly with respect to $\sigma \in \Omega$.
\end{remark}

Now, we apply Corollary \ref{cor:robustISS} to provide a sufficient condition for ISS of linear impulsive switched systems in terms of LMIs. We achieve this via candidate ISS-Lyapunov function given in Definition \ref{def:famcandLyapunovFunction} for the case when it is quadratic, i.e., $V_p(x) := x^TM_px$ for some $M_p \in \R^{n \times n}$ for $x \in X$, $p \in \mathcal M$. Furthermore, for ease of illustration, we only consider systems with finitely many modes $p$.
\begin{theorem}
 Consider a linear impulsive switched system $\widehat\Sigma$:
    \begin{align*}
        \dot x(t) &= A_{\sigma(t)} x(t) + B_{\sigma(t)} u(t),\ t \in I\setminus S,\\
        x(t_i) &= J_{\sigma(t_i^-)} x(t_i^-) + H_{\sigma(t_i^-)} u(t_i^-), \ t_i \in S,
    \end{align*}
    where $A_p \in \R^{n \times n}$, $B_p \in \R^{n \times m}, J_p \in \R^{n \times n}$ and $H_p \in \R^{n \times m}$ for $p \in \mathcal M$ and let $\sigma$ be subject to the mode changes $\mathcal{Q}$.

    If there exist symmetric positive definite matrices $M_p \in \R^{n \times n}, Q_p \in \R^{m \times m}$, $p \in \mathcal M$, and constants $\eta_p \in \R$ and $\mu_p > 0$, such that the LMIs
    \begin{align}
        \begin{pmatrix}
            A_p^T M_p + M_p A_p - \eta_p M_p &M_pB_p \\
            B_p^TM_p & -Q_p
        \end{pmatrix}
        \leq 0, \label{ineq:LMI1}\\
        \begin{pmatrix}
            J_q^TM_p J_q - \mu_q M_q &J_q^TM_pH_q \\
            H_q^TM_p J_q & H_q^T M_p H_q - Q_q
        \end{pmatrix}
        \leq 0 \label{ineq:LMI2}
    \end{align}
    are satisfied for all $(p,q)\in\mathcal{Q}$, then $\widehat\Sigma$ is ISS for each switching sequence $\sigma$ satisfying 
    the MDADT condition \eqref{ineq:MDADT} for all modes in $\mathcal{S}$ and the MDALT condition \eqref{ineq:MDALT} for all modes in $\mathcal{U}$ with corresponding dwell/leave times $\{\tau_p\}_{p \in \mathcal{S}}$ and $\{\tau_p\}_{p \in \mathcal{U}}$ with
    \begin{align}
        \eta_q &< 0 \land \frac{\ln \mu_q}{\abs{\eta_p}} \leq \tau_q (1 - \delta), \ \forall q \in \mathcal S. \label{ineq:LMI3}\\
        \eta_q &\geq 0 \land -\frac{\ln \mu_q}{\abs{\eta_p}} \geq \tau_q (1 + \delta), \ \forall q \in \mathcal U. \label{ineq:LMI4}
    \end{align}
    
    Moreover, a non-decreasing ISS-Lyapunov function for $\widehat\Sigma$ is given by by $V_\sigma(x) = x^T M_{\sigma(t)} x$.
\end{theorem}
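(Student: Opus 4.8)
The plan is to deduce the statement from Corollary~\ref{cor:robustISS} by checking that the quadratic family $\widetilde V_p(x) = x^T M_p x$, paired with the linear comparison functions $\varphi_p(s) = \eta_p s$ and $\psi_q(s) = \mu_q s$, meets all of that corollary's hypotheses; ISS and the form of the Lyapunov function then follow at once. Condition~1 of Corollary~\ref{cor:robustISS} is immediate from positive definiteness of the $M_p$ and finiteness of $\mathcal M$: setting $\alpha_1(s) := \bigl(\min_p \lambda_{\min}(M_p)\bigr)s^2$ and $\alpha_2(s) := \bigl(\max_p \lambda_{\max}(M_p)\bigr)s^2$ gives $\mathcal K_\infty$ functions with $\alpha_1(\norm{x}_X) \le x^T M_p x \le \alpha_2(\norm{x}_X)$ for every $p$.

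For Condition~2, I would extract the flow and jump estimates by evaluating quadratic forms: multiplying \eqref{ineq:LMI1} on the left and right by $(x^T, u^T)$ and $(x^T,u^T)^T$ yields the dissipation-type bound $\tfrac{\diff}{\diff t}\widetilde V_p = 2x^T M_p(A_p x + B_p u) \le \eta_p\,\widetilde V_p(x) + u^T Q_p u$, and likewise \eqref{ineq:LMI2} gives $\widetilde V_p(x^+) = (x^+)^T M_p x^+ \le \mu_q\,\widetilde V_q(x) + u^T Q_q u$ with $x^+ = g_q(x,u)$. To pass from these dissipation inequalities to the implication form required by Corollary~\ref{cor:robustISS}, I would use an $\mathcal S$-procedure / gain-margin argument: choosing a single $\chi \in \mathcal K_\infty$ (the maximum over the finitely many modes of a suitable quadratic) so that $\widetilde V_p(x) \ge \chi(\norm{u}_\infty)$ forces $u^T Q_p u$ to be an arbitrarily small fraction of $\widetilde V_p(x)$, the input terms are absorbed and one recovers $\tfrac{\diff}{\diff t}\widetilde V_p \le \varphi_p(\widetilde V_p)$ and $\widetilde V_p(x^+) \le \psi_q(\widetilde V_q(x))$ with rates equal to $\eta_p,\mu_q$ up to a perturbation that can be made as small as desired. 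Condition~3 follows from the same jump bound: when $\widetilde V_q(x) < \chi(\norm{u}_\infty)$, the estimate $\widetilde V_p(x^+) \le \mu_q\widetilde V_q(x) + u^T Q_q u$ is controlled by a $\mathcal K$-function of $\norm{u}_\infty$ alone, providing $\alpha_3$.

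It then remains to verify the dwell/leave-time hypotheses of Corollary~\ref{cor:robustISS}. Because the rates are linear, Remark~\ref{rem:linearDwellTimeCond} gives $\Phi_p(v) = \ln v / \abs{\eta_p}$, so after inserting $\psi_q(a) = \mu_q a$ the two required inequalities reduce to the logarithmic conditions \eqref{ineq:LMI3} and \eqref{ineq:LMI4}; here the sign of $\eta_q$ fixes the classification, $\eta_q < 0 \Leftrightarrow q \in \mathcal S$ and $\eta_q \ge 0 \Leftrightarrow q \in \mathcal U$, and $\tau_q$ is the associated MDADT/MDALT bound. With every hypothesis of Corollary~\ref{cor:robustISS} in place, $\widehat\Sigma$ is ISS and $V_\sigma(t,x) = x^T M_{\sigma(t)} x$ is the claimed non-decreasing ISS-Lyapunov function.

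I expect the main obstacle to be the input handling of the second paragraph: converting the dissipation inequalities that drop out of the LMIs into the implication-form conditions while keeping the effective flow/jump rates close enough to $\eta_p,\mu_q$ that the strict margins built into \eqref{ineq:LMI3} and \eqref{ineq:LMI4} (the factors $1\pm\delta$) still leave room to absorb the perturbation. A secondary delicate point is the $a$-independence of the dwell condition $\Phi_p(\psi_q(a)) - \Phi_q(a) \le \tau_q(1-\delta)$: with $\Phi_p(v)=\ln v/\abs{\eta_p}$ the left-hand side carries a $\ln a$ term proportional to $1/\abs{\eta_p} - 1/\abs{\eta_q}$, so the reduction to \eqref{ineq:LMI3} and \eqref{ineq:LMI4} must reconcile the flow rates of mode-change-adjacent modes (cf.\ the auxiliary constant $\tilde\mu$ in Remark~\ref{rem:linearDwellTimeCond}).
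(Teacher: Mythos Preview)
Your proposal is correct and follows essentially the same route as the paper: derive the dissipation-form flow and jump inequalities from the LMIs, convert to implication form, and invoke Corollary~\ref{cor:robustISS} together with the linear-rate dwell-time reduction of Remark~\ref{rem:linearDwellTimeCond}. The only organizational difference is that the paper does not perform the dissipation-to-implication conversion inline; instead it recognizes $V_\sigma$ as a non-decreasing ISS-Lyapunov function \emph{in dissipation form} (condition~2' of Definition~\ref{def:candLyapunovFunction}) and then applies the dedicated Lemma~\ref{lem:dissiptoimplicatLF}, which carries out precisely the rate-perturbation argument you anticipate as the main obstacle and shows that the $\delta$-slack in \eqref{ineq:LMI3}--\eqref{ineq:LMI4} survives with $\delta' = \delta/2$.
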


\begin{proof}
    Let \eqref{ineq:LMI3} and \eqref{ineq:LMI4} be satisfied. We show that $V_p(x) = x^T M_p x$ defines a nondecreasing ISS-Lyapunov function in dissipation form. Let $\chi(s) := \lambda s^2$ for $s \in \R^+_0$, where $\lambda$ is the maximum eigenvalue of $Q_p$ for all $p \in  \mathcal M$.
    Then, the time derivative of $V_p$ along the solutions of the continuous dynamics of $\widehat\Sigma$ satisfies
    \begin{align*}
        \dot V_p(x) &= \dot x^T M_p x + x^T M_p \dot x \\
        &= (A_p x + B_p u)^T M_p x + x^T M_p (A_p x + B_p u)  \\
        &= x^T A_p^T M_p x + x^T M_p A_p x + u^T B_p^T M_p x + x^T M_p B_p u \\
        &\leq \eta_p x^T M_p x + u^T Q_p u \\
        &\leq \eta_p  V_p(x) +  \chi(\norm{u}_\infty)
    \end{align*}
    for all $x \in X$, $u \in U$, where we used \eqref{ineq:LMI1} in the fourth step.

    For the jump dynamics of $\widehat\Sigma$, we have
    \begin{align*}
        V_p(x) &= x^T M_p x \\
        &= (J_q x^- + H_q u^-)^T M_p (J_q x^- + H_q u^-) \\
        &= (x^-)^T J_q^T M_p J_q x^- + (u^-)^T H_q^T M_p J_q x^- + (x^-)^T J_q^T M_p H_q u^- + (u^-)^T H_q^T M_p H_q u^- \\
        &\leq (x^-)^T M_q x^- + (u^-)^T Q_q u^- \\
        &\leq \mu_q V_q(x^-) + \chi(\norm{u}_\infty)
    \end{align*}
    for all $x^- \in X$, $u^- \in U$, such that $x = J_q x^- + H_q u^-$. Note that we applied \eqref{ineq:LMI2} in the fourth step.
    
    Therefore, $V_\sigma$ defines a non-decreasing Lyapunov function in dissipation form with rates $\varphi_p(s) = \eta_p s$ and $\psi_p(s) = \mu_p s$ for $s \in \R^+_0$. Note that \eqref{ineq:LMI3} and \eqref{ineq:LMI4} correspond to \eqref{ineq:dwellTimeConditionSFUJ} and \eqref{ineq:dwellTimeConditionUFSJ} (cf. Remark \ref{rem:linearDwellTimeCond}). Then, by Lemma \ref{lem:dissiptoimplicatLF} and Corollary \ref{cor:robustISS}, $\widehat\Sigma$ is ISS.
\end{proof}

\section{Conclusion}
\label{sec:conclusion}
We provided necessary and sufficient ISS conditions for impulsive switched systems that have modes with both stable and unstable flows. To achieve this, we used time-varying ISS-Lyapunov functions with nonlinear rate functions, along with  MDADT and MDALT conditions. We also presented a method for constructing decreasing ISS-Lyapunov functions from non-decreasing ones, which is an important and useful result in its own right. Additionally, we provided a method to guarantee ISS for a particular class of impulsive switched systems with unknown switching signals. 
\section{Appendix}


\subsection{Technical Lemmas}
\begin{lemma}\label{lem:LipschitzContinuousFlow}
	Consider system $\Sigma$ for a given switching signal $\sigma$. Let $f_p$ be locally Lipschitz continuous in the second argument, uniformly for all $t \in I$ and $p \in \mathcal M$, and $u \in B_{U_c}(D)$ for some constant $D > 0$.
	Let the function $g_p$ be locally Lipschitz continuous in the second variable, uniformly with respect to  $t \in I$ and $p \in \mathcal M$, and $u,v \in B_{U_c}(D)$. Then, on compact time intervals, uniform for all switching sequences, the solutions of $\Sigma$ are locally Lipschitz continuous with respect to the initial values, i.e., for all $C > 0$, $x_0,y_0 \in B_X(C)$, and for $\tau > t_0$, there exists a constant $L = L(C, D, \tau, \sigma)$, such that for $x(t) = x(t;t_0,x_0,u,\sigma)$ and $y(t) = x(t;t_0,y_0,v,\sigma)$, it holds that
    \begin{align*}
		&\max_{t \in [t_0,\tau]}\norm{x(t) - y(t)}_X \leq L(C, D, \tau, \sigma) \norm{x_0 - y_0}_X.
	\end{align*}
\end{lemma}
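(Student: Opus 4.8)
The plan is to upgrade the merely \emph{local} Lipschitz hypotheses on $f_p$ and $g_p$ to genuine Lipschitz estimates on a fixed compact set, and then to propagate the separation $\norm{x(t)-y(t)}_X$ through the alternating sequence of flows and jumps by a Gronwall argument and a Lipschitz contraction/expansion estimate, respectively. First I would fix $C, D, \tau$ and invoke robust forward completeness \eqref{eq:forwardCompleteness}: setting $R := K(C,D,\tau,\sigma)$ guarantees that both trajectories $x(t)=x(t;t_0,x_0,u,\sigma)$ and $y(t)=x(t;t_0,y_0,v,\sigma)$ remain in the ball $B_X(R)$ for all $t\in[t_0,\tau]$. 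On $B_X(R)$ the uniform local Lipschitz continuity of $f_p$ and $g_p$ in the second argument yields honest constants $L_f$ and $L_g$, uniform over $t\in I$, $p\in\mathcal M$ and inputs in $B_{U_c}(D)$. Since the switching sequence $(t_i)$ has no finite accumulation point, only finitely many switching instants $t_0 < t_1 < \dots < t_N \le \tau$ lie in $[t_0,\tau]$.

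Next I would estimate the growth of $\norm{x(t)-y(t)}_X$ on a single flow interval $[t_i,t_{i+1})$. Writing the two solutions in integral form and subtracting, the Lipschitz bound for $f$ gives $\norm{x(t)-y(t)}_X \le \norm{x(t_i^+)-y(t_i^+)}_X + L_f\int_{t_i}^t \norm{x(s)-y(s)}_X\,\diff s$ (taking the inputs to coincide, $u=v$, so that no input-difference term appears and the bound depends only on the initial separation as claimed). Gronwall's inequality then yields $\norm{x(t)-y(t)}_X \le e^{L_f(t-t_i)}\norm{x(t_i^+)-y(t_i^+)}_X \le e^{L_f(\tau-t_0)}\norm{x(t_i^+)-y(t_i^+)}_X$. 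At each switching instant the jump map is Lipschitz, so $\norm{x(t_i^+)-y(t_i^+)}_X = \norm{g_{\sigma(t_i^-)}(t_i^-,x(t_i^-),u(t_i^-))-g_{\sigma(t_i^-)}(t_i^-,y(t_i^-),u(t_i^-))}_X \le L_g\,\norm{x(t_i^-)-y(t_i^-)}_X$.

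Finally I would concatenate these per-step estimates. Propagating the separation through the $N$ flow--jump pairs in $[t_0,\tau]$ produces a product bound of the form $\max_{t\in[t_0,\tau]}\norm{x(t)-y(t)}_X \le e^{L_f(\tau-t_0)}\big(L_g\,e^{L_f(\tau-t_0)}\big)^{N}\,\norm{x_0-y_0}_X$, so that $L(C,D,\tau,\sigma) := e^{L_f(\tau-t_0)}\big(L_g\,e^{L_f(\tau-t_0)}\big)^{N}$ is the desired constant. Its explicit dependence on $N$, the number of switches in $[t_0,\tau]$, is exactly why $L$ is allowed to depend on $\sigma$.

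The main obstacle is the first step: without a uniform a priori bound on both trajectories, the local Lipschitz hypotheses cannot be turned into a single constant valid along the whole interval, and Gronwall would not apply. Robust forward completeness is precisely what removes this obstruction by confining the trajectories to a fixed compact set $B_X(R)$ on $[t_0,\tau]$. A secondary technical point is bookkeeping the jumps carefully so that the final constant correctly accounts for the finite number of switches; this is routine once the finiteness of $N$ and the two per-step estimates are in hand.
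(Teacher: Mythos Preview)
Your proposal is correct and follows essentially the same route as the paper: invoke robust forward completeness to confine both trajectories to a ball $B_X(R)$ with $R=K(C,D,\tau,\sigma)$, extract uniform Lipschitz constants $L_f,L_g$ there, apply Gronwall on each flow interval, use the Lipschitz bound for $g_p$ across each jump, and concatenate over the finitely many switches in $[t_0,\tau]$. The only cosmetic difference is that the paper telescopes the flow exponentials to obtain the tighter constant $L_g^{N}\,e^{L_f(\tau-t_0)}$, whereas your bound $e^{L_f(\tau-t_0)}\bigl(L_g\,e^{L_f(\tau-t_0)}\bigr)^{N}$ is slightly looser but equally valid; your explicit remark that one must take $u=v$ for the stated estimate to hold matches what the paper's proof implicitly assumes.
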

\begin{proof}
	Let $C > 0$, $\tau > t_0$, $u \in B_{U_c}(D)$, and $x_0, y_0 \in B_X(C)$. Let us denote the Lipschitz constant for $f_p = f_p(t,x,u)$ and $g_p = g_p(t,x,u)$, for $\norm{x} \leq C$ and $\norm{u} \leq C$, by $L_f = L_f(C)$ and $L_g(C)$, respectively.
	
    Let $t \in [t_0, t_1)$ be such that $t \leq \tau$. Then, we have
	\begin{align*}
		x(t) = x_0 + \int_{t_0}^t f_{\sigma(s)}\!\para{s, x(s), u(s)} \,\diff s.
	\end{align*}
	Let $x_i^- := x(t_i^-)$ and $x_i := x(t_i) = g_{\sigma(t_i^-)}(t_i^-,x_i^-,u(t_i^-))$ be such that
	\begin{align*}
		x(t) = x_i + \int_{t_i}^t f_{\sigma(s)}\!\para{s, x(s), u(s)} \,\diff s
	\end{align*}
	holds true for all $t \in [t_i, t_{i + 1})$, $t \leq \tau$.
	From this, for any two solutions $x$ and $y$ of $\Sigma$ with initial values $x_0$ and $y_0$, respectively, it follows that
	\begin{align*}
		\norm{x(t) - y(t)}_X \leq \norm{x_i - y_i}_X + \int_{t_i}^t L_f\!\para{K(C, D, \tau, \sigma)} \norm{x(s) - y(s)}_X \,\diff s,
	\end{align*}
	where $t \in [t_i, t_{i + 1})$, $t \leq \tau$, $y_i^- := y(t_i^-)$, $y_i := y(t_i) = g_{\sigma(t_i^-)}(t_i^-,y_i^-,v(t_i^-))$, and $K = K(C, D, \tau, \sigma)$, as introduced in~\eqref{eq:forwardCompleteness}. Using Gronwall's inequality, it follows that
	\begin{align*}
		\norm{x(t) - y(t)}_X \leq \norm{x_i - y_i}_X e^{L_f\para{K}(t - t_i)}
	\end{align*}
	for all $t \in [t_i,t_{i+1}), t \leq \tau$. Moreover,
	\begin{align*}
		\norm{x(t_{i+1}) - y(t_{i+1})}_X 
		&\leq L_g\!\para{K} \norm{x(t_{i + 1}^-) - y(t_{i + 1}^-)}_X \leq L_g\!\para{K}  \norm{x_i - y_i}_X \!e^{L_f\para{K}(t_{i+1} - t_i)},
	\end{align*}
	where $L_g$ is the Lipschitz constant.
	By induction, we obtain
	\begin{align*}
		&\norm{x(t) - y(t)}_X \leq L_g^n\!\para{K} e^{\para{L_f\!\para{K}}(t - t_0)} \norm{x_0 - y_0}_X 
	\end{align*}
	for $t \in [t_n, t_{n+1})$, $ t \leq \tau$ and $n \in \N_0$.

    As the switching sequence does not exhibit Zeno behavior, it follows that
    \begin{align*}
		\max_{t \in [t_0,\tau]}\norm{x(t) - y(t)}_X \leq \norm{x_0 - y_0}_X \max\limits_{t \in [t_0, \tau]} \! \braces{L_g^{N(t_0,t)}\!\para{K} e^{L_f\!\para{K}(t - t_0)}}.
	\end{align*}
    By the extreme value theorem, the bound 
    \begin{align*}
        L(C, D, \tau, \sigma) := \max\limits_{t \in [t_0, \tau]} \! \braces{L_g^{N(t_0,t)}\!\para{K(C, D, \tau, \sigma)} e^{L_f\!\para{K(C, D, \tau, \sigma)}(t - t_0)}}
    \end{align*}
    exists and gives us a Lipschitz constant for the solutions of $\Sigma$ with respect to the initial conditions.
\end{proof}

\begin{lemma}\label{lem:dissiptoimplicatLF}
    If a system $\Sigma$ has a nondecreasing ISS-Lyapunov function in dissipation form with linear rates $\widetilde \varphi_p(s) = \widetilde \eta_p s$, $\widetilde \psi_p(s) = \widetilde \mu_p s$, $s \in \R^+_0$ and $\widetilde \chi_p\in\KK_\infty$, then it has a nondecreasing ISS-Lyapunov function (in implication form) with linear rate function. 
    \\
    We can choose $\varphi_p(s) := \eta_p s$, $\psi_p(s) = \mu_p s$ for $s \in \R^+_0$, where $\eta_p = \frac{1 - \delta}{1 - \frac{3}{4}\delta}\widetilde \eta_p$ in case $p \in \mathcal S$, i.e., $\widetilde \eta_p < 0$, and $\eta_p = \frac{1 + \delta}{1 + \frac{3}{4}\delta}\widetilde \eta_p$ else, as well as $\mu_p = e^{-\frac{\delta \tau_p \eta_p}{4}} \widetilde \mu_p$ and $\chi_p(s) := \min\!\braces{\frac{1}{\eta_p - \widetilde \eta_p}, \frac{1}{\mu_p - \widetilde \mu_p}}\cdot \widetilde \chi_p (s)$.
\end{lemma}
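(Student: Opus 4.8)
The plan is to keep the Lyapunov function $V_\sigma$ itself unchanged and only replace its rate functions, reading the passage from the dissipation estimates \eqref{ineq:LyapunovDissipationFlow}--\eqref{ineq:LyapunovDissipationJump} to the implication estimates \eqref{ineq:LyapunovFlow}--\eqref{ineq:LyapunovJump} as a standard ``gain reduction'': when $V_\sigma$ is large relative to $\chi_p(\|u\|_\infty)$, the additive input term in the dissipation bounds can be absorbed into a slightly weaker rate. Since $V_\sigma$ is untouched, Condition~\ref{def:LyapFuncCondition1} of Definition~\ref{def:candLyapunovFunction} holds verbatim with the same $\alpha_1,\alpha_2$, and Condition~\ref{def:LyapFuncCondition3} will follow immediately from the (unconditional) dissipation jump bound. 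The real content is therefore (i) verifying the implication-form inequalities with the new linear rates, and (ii) checking that the dwell/leave-time conditions \eqref{ineq:dwellTimeConditionSFUJ}--\eqref{ineq:dwellTimeConditionUFSJ} of Definition~\ref{def:nonDecrLF} survive the rate change, i.e.\ that $V_\sigma$ remains a \emph{non-decreasing} ISS-Lyapunov function.

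For the absorption I would treat flow and jump symmetrically. From $\tfrac{\diff}{\diff t}V_\sigma \le \widetilde\eta_p V_\sigma + \widetilde\chi_p(\|u\|_\infty)$ one obtains $\tfrac{\diff}{\diff t}V_\sigma \le \eta_p V_\sigma$ on the region $\{V_\sigma \ge \chi_p(\|u\|_\infty)\}$ as soon as $\eta_p - \widetilde\eta_p > 0$ and $\chi_p \ge \widetilde\chi_p/(\eta_p-\widetilde\eta_p)$; note that the chosen $\eta_p$ keep the sign of $\widetilde\eta_p$, so $-\varphi_p\in\mathcal P$ for $p\in\mathcal S$ and $\varphi_p\in\mathcal P$ for $p\in\mathcal U$, as required. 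The identical computation for the jump gives $V_\sigma(t_i^+) \le \mu_p V_\sigma(t_i^-)$ on $\{V_\sigma(t_i^-)\ge \chi_p(\|u\|_\infty)\}$ provided $\mu_p > \widetilde\mu_p$ and $\chi_p \ge \widetilde\chi_p/(\mu_p-\widetilde\mu_p)$. Taking $\chi_p$ to be $\widetilde\chi_p$ scaled by the larger of the two reciprocal factors makes both hold simultaneously (and $\chi_p\in\mathcal K_\infty$ since $\widetilde\chi_p\in\mathcal K_\infty$); for Condition~\ref{def:LyapFuncCondition3}, whenever $V_\sigma(t_i^-)<\chi_p(\|u\|_\infty)$ the unconditional jump bound yields $V_\sigma(t_i^+)\le \widetilde\mu_p\chi_p(\|u\|_\infty)+\widetilde\chi_p(\|u\|_\infty)=:\widetilde\alpha_3(\|u\|_\infty)\in\mathcal K$.

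The crux is step (ii). Using that for linear rates $\Phi_p(v)=\tfrac{\ln v}{|\eta_p|}$ (Remark~\ref{rem:linearDwellTimeCond}), I would substitute the new $\eta_p=\tfrac{1-\delta}{1-\frac34\delta}\widetilde\eta_p$ (for $p\in\mathcal S$), $\eta_p=\tfrac{1+\delta}{1+\frac34\delta}\widetilde\eta_p$ (for $p\in\mathcal U$), and $\mu_p$ with $\ln\mu_p = \tfrac{\delta\tau_p|\eta_p|}{4}+\ln\widetilde\mu_p$ into \eqref{ineq:dwellTimeConditionSFUJ}--\eqref{ineq:dwellTimeConditionUFSJ}. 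In the representative single-mode reduction this collapses to $\tfrac{\ln\mu_p}{|\eta_p|} = \tfrac{\delta\tau_p}{4} + \tfrac{1-\frac34\delta}{1-\delta}\cdot\tfrac{\ln\widetilde\mu_p}{|\widetilde\eta_p|}$ for $p\in\mathcal S$; feeding in the \emph{old} bound $\tfrac{\ln\widetilde\mu_p}{|\widetilde\eta_p|}\le\tau_p(1-\delta)$ gives $\tfrac{\ln\mu_p}{|\eta_p|}\le \tau_p(1-\tfrac{\delta}{2})$, while the analogous computation for $p\in\mathcal U$ yields $-\tfrac{\ln\mu_p}{|\eta_p|}\ge\tau_p(1+\tfrac{\delta}{2})$. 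Hence all conditions of Definition~\ref{def:nonDecrLF} hold with the reduced slack $\delta' := \delta/2 > 0$: the precise constants $\tfrac{1-\delta}{1-\frac34\delta}$, $\tfrac{1+\delta}{1+\frac34\delta}$ and the $\tfrac{\delta}{4}$-shift in $\mu_p$ are exactly what is needed so that the fraction of the original slack spent on the gain reduction is recovered.

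I expect the main obstacle to be precisely this bookkeeping in step (ii), and in particular its fully general, cross-mode form, where $\Phi_{\sigma(t_i)}$ and $\Phi_{\sigma(t_i^-)}$ carry different rates $|\eta_{\sigma(t_i)}|\neq|\eta_{\sigma(t_i^-)}|$ scaled by different factors depending on whether the incoming mode lies in $\mathcal S$ or $\mathcal U$. One must check that this rescaling does not destroy the finiteness of the left-hand sides of \eqref{ineq:dwellTimeConditionSFUJ}--\eqref{ineq:dwellTimeConditionUFSJ} over all $a>0$, which amounts to verifying that the monotone reparametrisation $\widetilde\eta_p\mapsto\eta_p$ preserves the ordering of the flow rates inherited from the original conditions; the single-mode computation above isolates the mechanism, and the general case follows by the same tracking of the pre- and post-switch indices.
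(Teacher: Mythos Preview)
Your proposal is correct and follows essentially the same approach as the paper: keep $V_\sigma$ unchanged, absorb the additive $\widetilde\chi_p$-term into a slightly worsened linear rate on the implication region, and then verify that the single-mode dwell-time quotient $\tfrac{\ln\mu_p}{|\eta_p|}$ still satisfies \eqref{ineq:dwellTimeConditionSFUJ}--\eqref{ineq:dwellTimeConditionUFSJ} with the halved slack $\delta'=\delta/2$. Your treatment is in fact slightly more complete than the paper's, since you explicitly handle Condition~\ref{def:LyapFuncCondition3} and flag the cross-mode rescaling issue, both of which the paper's proof leaves implicit; note also that your ``larger of the two reciprocal factors'' is the correct choice for $\chi_p$ (the $\min$ in the lemma statement appears to be a typo).
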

\begin{proof}
     With the choices of $\varphi_p, \psi, \chi, \eta$, and $\mu$ from above follows that $V_p(x) \geq \chi_p(\norm u_\infty)$. Then, it follows for each $p \in \mathcal M$ that
     \begin{align*}
         \dot V_p(x) &\leq \widetilde \varphi_p(V_p(x)) + \widetilde \chi_p (\norm u_\infty)
         \leq \widetilde \eta_p V_p(x) + (\eta_p - \widetilde \eta_p) V_p(x) = \eta_p V_p(x) = \varphi_p(V_p(x)), \\
         V_p(x) &\leq \widetilde \psi_p(V_p(x)) + \widetilde \chi_p (\norm u_\infty)
         \leq \widetilde \mu_p V_p(x) + (\mu_p - \widetilde \mu_p) V_p(x) = \mu_p V_p(x) = \psi_p(V_p(x)).
     \end{align*}
     Furthermore, for $p \in \mathcal S$ and $\frac{\ln \tilde \mu_p}{\abs{\tilde \eta_p}} \leq \tau_p (1 - \delta)$, it holds that
     \begin{align*}
         \frac{\ln \mu_p}{\abs{\eta_p}} 
         &= \frac{\ln \!\para{e^\frac{\delta \tau_p \abs{\eta_p}}{4} \widetilde \mu_p}}{\abs{\eta_p}} 
         = \frac{\frac{\delta \tau_p \abs{\eta_p}}{4}}{\abs{\eta_p}} + \frac{\ln  \widetilde \mu_p}{\frac{1 - \delta}{1 - \frac{3}{4}\delta}\abs{\widetilde \eta_p}} = \frac{\delta \tau_p}{4}
         + \tau_p\para{1 - \frac{3}{4}\delta}
         \leq \tau_p \para{1 - \delta'},
     \end{align*}
     where $\delta' := \frac{\delta}{2}$. From this, it follows immediately that $V_p$ is a nondecreasing ISS-Lyapunov function in implication form. A similar procedure achieves the result for $p \in \mathcal U$.
\end{proof}

\bibliographystyle{IEEEtran}
\bibliography{main}

\begin{thebibliography}{10}
\providecommand{\url}[1]{#1}
\csname url@samestyle\endcsname
\providecommand{\newblock}{\relax}
\providecommand{\bibinfo}[2]{#2}
\providecommand{\BIBentrySTDinterwordspacing}{\spaceskip=0pt\relax}
\providecommand{\BIBentryALTinterwordstretchfactor}{4}
\providecommand{\BIBentryALTinterwordspacing}{\spaceskip=\fontdimen2\font plus
\BIBentryALTinterwordstretchfactor\fontdimen3\font minus \fontdimen4\font\relax}
\providecommand{\BIBforeignlanguage}[2]{{%
\expandafter\ifx\csname l@#1\endcsname\relax
\typeout{** WARNING: IEEEtran.bst: No hyphenation pattern has been}%
\typeout{** loaded for the language `#1'. Using the pattern for}%
\typeout{** the default language instead.}%
\else
\language=\csname l@#1\endcsname
\fi
#2}}
\providecommand{\BIBdecl}{\relax}
\BIBdecl

\bibitem{Simeonov1986}
P.~Simeonov and D.~Bainov, ``Stability with respect to part of the variables in systems with impulse effect,'' \emph{Journal of Mathematical Analysis and Applications}, vol. 117, no.~1, pp. 247--263, Jul. 1986.

\bibitem{liberzon2003switching}
D.~Liberzon, \emph{Switching in Systems and Control}.\hskip 1em plus 0.5em minus 0.4em\relax Birkhäuser Boston, MA, 2003.

\bibitem{li2005switched}
Z.~Li, Y.~Soh, and C.~Wen, \emph{Switched and Impulsive Systems: Analysis, Design and Applications}.\hskip 1em plus 0.5em minus 0.4em\relax Springer-Verlag Berlin, Heidelberg, 2005.

\bibitem{jiang1994small}
Z.~P. Jiang, A.~R. Teel, and L.~Praly, ``Small-gain theorem for {ISS} systems and applications,'' \emph{Mathematics of Control, Signals and Systems}, vol.~7, pp. 95--120, 1994.

\bibitem{heemels2007input}
W.~Heemels, S.~Weiland, and A.~L. Juloski, ``Input-to-state stability of discontinuous dynamical systems with an observer-based control application,'' in \emph{International Workshop on Hybrid Systems: Computation and Control}.\hskip 1em plus 0.5em minus 0.4em\relax Springer, 2007, pp. 259--272.

\bibitem{heemels2008input}
W.~Heemels and S.~Weiland, ``Input-to-state stability and interconnections of discontinuous dynamical systems,'' \emph{Automatica}, vol.~44, no.~12, pp. 3079--3086, 2008.

\bibitem{dayawansa1999converse}
W.~P. Dayawansa and C.~F. Martin, ``A converse {L}yapunov theorem for a class of dynamical systems which undergo switching,'' \emph{IEEE Transactions on Automatic control}, vol.~44, no.~4, pp. 751--760, 1999.

\bibitem{lin2009stability}
H.~Lin and P.~J. Antsaklis, ``Stability and stabilizability of switched linear systems: {A} survey of recent results,'' \emph{IEEE Transactions on Automatic control}, vol.~54, no.~2, pp. 308--322, 2009.

\bibitem{Ahmed2018}
S.~Ahmed, F.~Mazenc, and H.~{\"O}zbay, ``Dynamic output feedback stabilization of switched linear systems with delay via a trajectory based approach,'' \emph{Automatica}, vol.~93, pp. 92--97, Jul. 2018.

\bibitem{Zhai2001}
G.~Zhai, B.~Hu, K.~Yasuda, and A.~N. Michel, ``Stability analysis of switched systems with stable and unstable subsystems: An average dwell time approach,'' \emph{International Journal of Systems Science}, vol.~32, no.~8, pp. 1055--1061, 2001.

\bibitem{Hespanha2008}
J.~P. Hespanha, D.~Liberzon, and A.~R. Teel, ``Lyapunov conditions for input-to-state stability of impulsive systems,'' \emph{Automatica}, vol.~44, no.~11, pp. 2735--2744, 2008.

\bibitem{Dashkovskiy2013}
S.~Dashkovskiy and A.~Mironchenko, ``Input-to-state stability of nonlinear impulsive systems,'' \emph{{SIAM} Journal on Control and Optimization}, vol.~51, no.~3, pp. 1962--1987, Jan. 2013.

\bibitem{Tren12}
S.~Trenn, ``Switched {D}ifferential {A}lgebraic {E}quations,'' in \emph{Dynamics and Control of Switched Electronic Systems - Advanced Perspectives for Modeling, Simulation and Control of Power Converters}, F.~Vasca and L.~Iannelli, Eds.\hskip 1em plus 0.5em minus 0.4em\relax London: Springer, 2012, ch.~6, pp. 189--216.

\bibitem{liu2012class}
J.~Liu, X.~Liu, and W.-C. Xie, ``Class-$\mathcal{K}\mathcal{L}$ estimates and input-to-state stability analysis of impulsive switched systems,'' \emph{Systems \& Control Letters}, vol.~61, no.~6, pp. 738--746, 2012.

\bibitem{li2018input}
X.~Li, P.~Li, and Q.-g. Wang, ``Input/output-to-state stability of impulsive switched systems,'' \emph{Systems \& Control Letters}, vol. 116, pp. 1--7, 2018.

\bibitem{zhu2020input}
H.~Zhu, P.~Li, X.~Li, and H.~Akca, ``Input-to-state stability for impulsive switched systems with incommensurate impulsive switching signals,'' \emph{Communications in Nonlinear Science and Numerical Simulation}, vol.~80, p. 104969, 2020.

\bibitem{mancilla2020uniform}
J.~L. Mancilla-Aguilar and H.~Haimovich, ``Uniform input-to-state stability for switched and time-varying impulsive systems,'' \emph{IEEE Transactions on Automatic Control}, vol.~65, no.~12, pp. 5028--5042, 2020.

\bibitem{Saeed2024atime-varying}
S.~Ahmed, P.~Bachmann, and S.~Trenn, ``A time-varying {L}yapunov function characterization of input-to-state stability of impulsive switched systems,'' 2024, submitted.

\bibitem{Yin2023}
H.~Yin, B.~Jayawardhana, and S.~Trenn, ``On contraction analysis of switched systems with mixed contracting-noncontracting modes via mode-dependent average dwell time,'' \emph{{IEEE} Transactions on Automatic Control}, vol.~68, no.~10, pp. 6409--6416, 2023.

\bibitem{DefoDjem14}
M.~Defoort, M.~Djemai, and S.~Trenn, ``Nondecreasing {L}yapunov functions,'' in \emph{Proceedings of the 21st International Symposium on Mathematical Theory of Networks and Systems}, 2014, pp. 1038--1043.

\bibitem{Bachmann2023characterization}
P.~Bachmann, S.~Ahmed, and N.~Bajcinca, ``Characterization of input-to-state stability for infinite-dimensional impulsive systems,'' \emph{IEEE Transactions on Automatic Control}, pp. 1--8, 2024, doi: 10.1109/TAC.2024.3392636.

\end{thebibliography}

\begin{IEEEbiography}[{\includegraphics[width=1in,height=1.25in,clip,keepaspectratio]{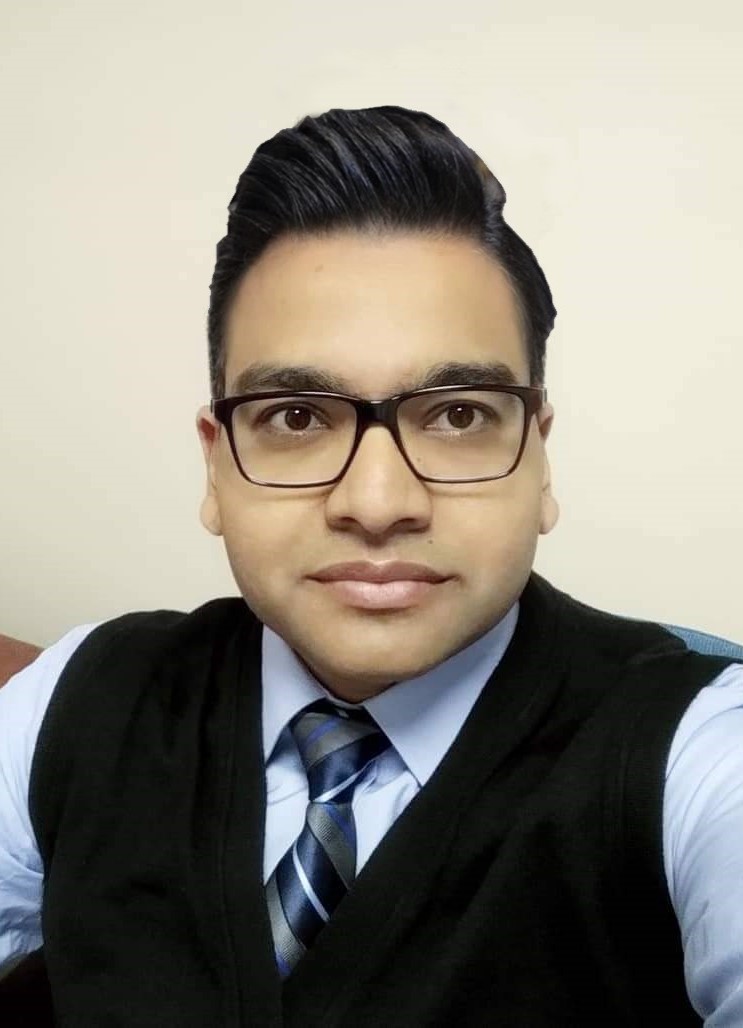}}]{Saeed Ahmed} is an Assistant Professor of Systems and Control at the University of Groningen, where he is affiliated with the Engineering and Technology Institute Groningen and the Jan C. Willems Center for Systems and Control. Prior to joining this position, he held postdoctoral positions at the University of Groningen with Jacquelien Scherpen and at the Technical University of Kaiserslautern (now RPTU), Germany. He completed his Ph.D. at Bilkent University, Turkey. During his Ph.D., he was a visiting scholar at CentraleSupélec, France. His Ph.D. was supervised by  Frederic Mazenc and Hitay Ozbay. He also collaborated with Micheal Malisoff during his Ph.D. His research interests span various topics in systems and control engineering. From a theoretical point of view, he is interested in stability and control, online optimization, observer design, nonlinear and hybrid (switched and impulsive) systems, dissipativity and passivity analysis, robust control, LPV systems, and time-delay systems. From an application point of view, he is interested in designing intelligent control algorithms for autonomous vehicles and district heating systems.   He received the best presentation award in the Control/Robotics/Communications/Network category at the IEEE Graduate Research Conference 2018 held in Bilkent University, Turkey. He is an associate editor of Systems and Control Letters and a member of the IFAC Technical Committees on Networked Systems, Non-linear Control Systems, and Distributed Parameter Systems. 
\end{IEEEbiography}

\begin{IEEEbiography}[{\includegraphics[width=1in,height=1.25in,clip,keepaspectratio]{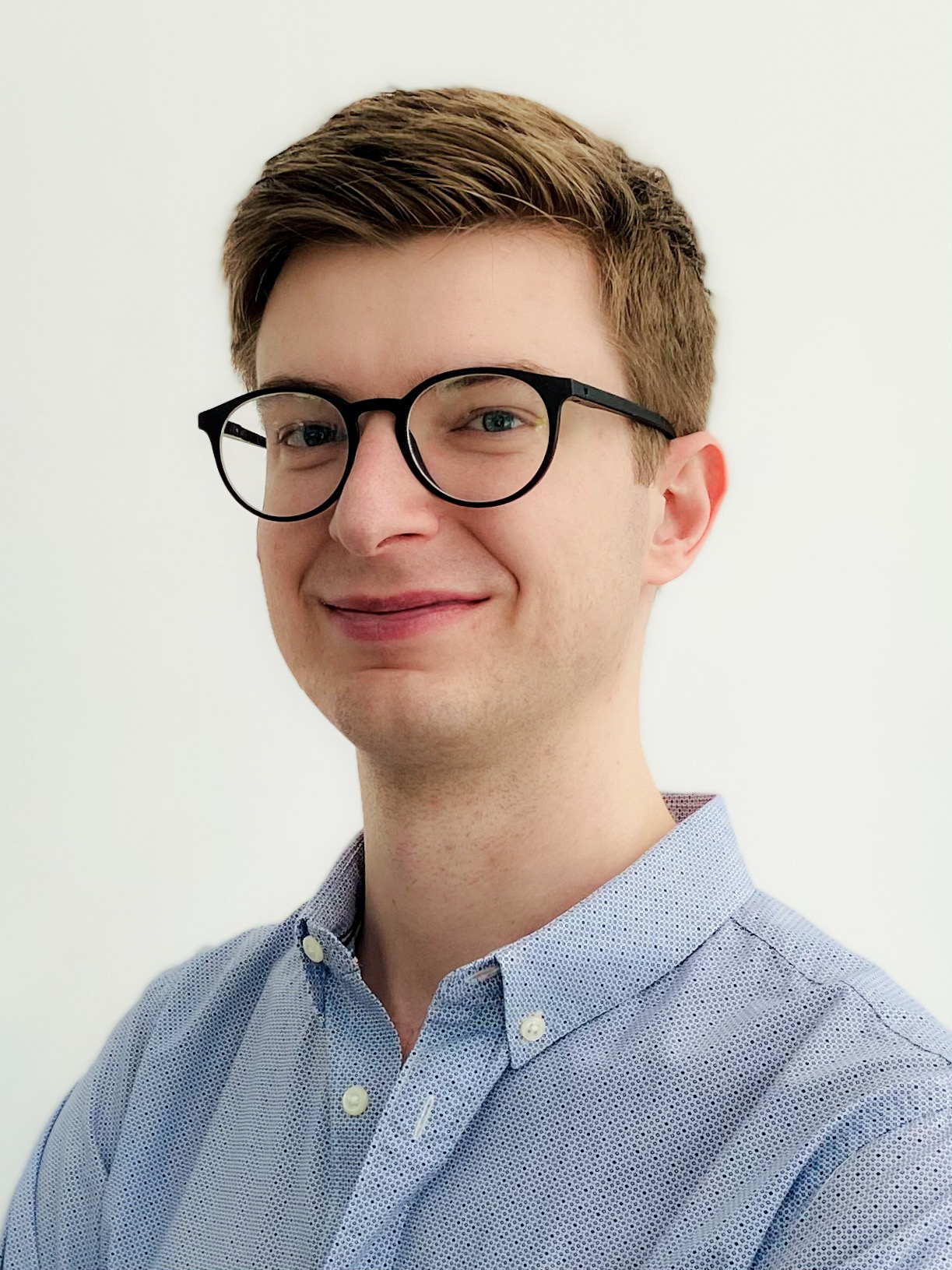}}]{Patrick Bachmann} was born in Speyer, Germany, in 1992. He received his Bachelor's degree in business mathematics from the University of Mannheim, Germany, in 2015 and his Master's degree in Mathematics from Karlsruhe Institute of Technology in 2018. From 2018 to 2021, he was a research assistant at the Faculty of Mechanical and Process Engineering of the Technical University of Kaiserslautern, Germany. Since 2021, he is a teaching assistant at the Institute of Mathematics at the University of Würzburg in Germany where he is also pursuing his Ph.D. degree in Mathematics under the supervision of Sergey Dashkovskiy and Andrii Mironchenko. His research interests include impulsive systems, stability and control theory, Lyapunov functions and infinite dimensional systems.
\end{IEEEbiography}

\begin{IEEEbiography}[{\includegraphics[width=1in,height=1.25in,clip,keepaspectratio]{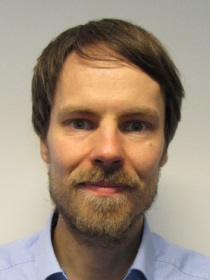}}]{Stephan Trenn} received his Ph.D. (Dr. rer. nat.) within
the field of differential algebraic systems and distribution
theory at the Ilmenau University of Technology,
Germany, in 2009. Afterwards, he held Postdoc positions
at the University of Illinois at Urbana-Champaign,
USA (2009–2010) and at the University of Würzburg,
Germany (2010–2011). After being an Assistant Professor
(Juniorprofessor) at the University of Kaiserslautern,
Germany, he became Associate Professor for Systems
and Control at the University of Groningen, Netherlands,
in 2017. He is an Associate Editor for the journals
Systems and Control Letters, Nonlinear Analysis: Hybrid Systems, IEEE Control
Systems Letters, and DAE Panel.
\end{IEEEbiography}

\end{document}